\documentclass{article}

\usepackage{todonotes}
\usepackage{biblatex}
\usepackage{amsthm}
\usepackage{comment}
\usepackage{authblk}

\newtheorem{theorem}{Theorem}[section]
\newtheorem{lemma}[theorem]{Lemma}
\newtheorem{corollary}[theorem]{Corollary}
\newtheorem{observation}[theorem]{Observation}
\newtheorem{proposition}[theorem]{Proposition}

\theoremstyle{definition}
\newtheorem{definition}{Definition}[section]

\title{Qubit Routing for (Almost) Free}
\author[1]{Arianne Meijer--van de Griend}
\affil[1]{Department of Computer Science, University of Helsinki, Finland}

\bibliography{generic}

\begin{document}
\maketitle
\begin{abstract}
In this paper, we give a mathematical proof that bounds the number of CNOT gates required to synthesize an $n$ qubit phase polynomial with $g$ terms to be at least $\mathcal{O}(\frac{gn}{\max (\log g, 1)})$ and at most $\mathcal{O}(gn)$. However, when targeting restricted hardware, not all CNOTs are allowed. If we were to use SWAP-based methods to route the qubits on the architecture such that the earlier synthesized gates are natively allowed, we increase the number of CNOTs by a routing overhead factor of $\mathcal{O}(\log n) \leq \alpha \leq \mathcal{O}(n \log^2 n)$.
However, if we only synthesize allowed gates, we do not need to route any qubits. Moreover, in that case the routing overhead factor is $1 \leq \alpha \leq 4 \simeq\mathcal{O}(1)$. Additionally, since phase polynomials and Hadamard gates together form a universal gate set, we get qubit routing for almost free. 
\end{abstract}

\section{Introduction}
The qubit routing problem is a thorn in the side of many quantum platforms. If gates in a circuit do not adhere to a QPUs connectivity constraints, the computational cost of optimally swapping qubits around the architecture is huge~\cite{paler2021nisq,cowtan2019qubit,ito2023algorithmic}, and the resulting circuits are generally much too deep to run on noisy hardware.

However, recent advances in architecture-aware synthesis~\cite{gogioso2022annealing,huang2024redefining,vandaele2022phase,winderl2023recursively} experimentally indicate that it is better to make circuits that fit hardware constraints immediately rather than trying to fix the connectivity constraints after synthesis using SWAP gates. In principle, these algorithms solve the qubit routing problem without routing the qubits. Instead of interpreting the qubit routing problem as a problem of qubit connectivity, architecture-aware synthesis interprets the qubit routing problem as a problem of using the wrong gates. This change in perspective allows architecture-aware synthesis to make smarter choices for gate placement and apply global optimizations. 

Previous research~\cite{jssoftware} has largely focused on experimental results comparing various architecture-aware synthesis approaches with common state-of-the-art SWAP-based transpilers. In this paper, we go a step further and prove mathematically that fixing connectivity constraints after synthesis adds many preventable CNOTs. In contrast, we also prove that the minimum CNOT complexity and the maximum CNOT complexity of architecture-aware synthesis and unconstrained synthesis is the same for phase polynomials. From there we extrapolate that if we write universal circuits as sums over paths, i.e. phase polynomials interleaved with Hadamard gates, we can synthesize any arbitrary circuit for any arbitrary connectivity graph with constant overhead. 

This different approach to the qubit routing problem does not conform to current assumptions in the design of the quantum software stack. Generally, transpilers take as input a well-optimized synthesized circuit and transform them to circuits that fit the target devices native gate set, including connectivity constraints. However, in architecture-aware synthesis it is best to give a circuit comprising of larger logical primitives such that the transpiler does not need to undo the synthesis that originally created the circuit. Note that for any quantum program that is not a small toy example, the programs are defined in these larger logical primitives and then synthesized. 

Thus, the main takeaway from this research is that sparse qubit connectivity is not necessarily as detrimental to circuit execution as SWAP-based transpilers make it seem. We just need to be more mindful about the original circuit generation process and take target hardware earlier into account in the stack.

The structure of this paper is as follows. In Section~\ref{phasepoly}, analyse the CNOT count for phase polynomials. First, we give a short introduction into phase polynomials as they are used as the main primitive in the rest of the paper (Section~\ref{subsec:prelim}). Then, we prove the upper and lower bounds for CNOT count for phase polynomial synthesis (Sections~\ref{subsec:upper} and \ref{subsec:lower}, respectively). For this, we give the bounds for the CNOT count in the unconstrained case, where all CNOTs are allowed, and the constrained case, where CNOTs are only allowed between certain pairs of qubits. Additionally, we use these bounds to calculate the additional CNOTs required when routing a phase polynomial that was not synthesized to fit the target hardware (Section~\ref{subsec:cost}). In Section~\ref{sec:universal}, we generalize the results for arbitrary circuits. Lastly, we end the paper with a discussion on the limitations of this work (Section~\ref{sec:discussion}), future research directions (Section~\ref{sec:future}), and conclusion (Section~\ref{sec:conclusion}).

\section{On the CNOT complexity of phase polynomials}\label{phasepoly}
As a stepping stone to the analysis of the cost qubit routing to universal circuits, we first prove some new bounds for phase polynomial synthesis in both the unconstrained case and the constrained case where CNOTs are only allowed between certain qubit pairs.

\subsection{Phase polynomials}\label{subsec:prelim}
Phase polynomials are circuits consisting of CNOT and $R_Z$ gates. Another way of thinking about them is as a Hamiltonian consisting only of multi-qubit Z-interactions. Each term in the Hamiltonian rotates the qubits with some specified angle based on the qubits that term interacts with. Mathematically, such a term is written as $e^{-i\frac{\alpha}{2}\otimes_{P_i\in \{I, Z\}} P_i}$ where $\alpha$ is the angle of rotation and the qubit interacts with this term if the corresponding $P_i$ is $Z$.

Phase polynomials make a convenient representation because all terms commute with each other. Additionally, we can easily keep track of each term in a binary matrix where each column is a bitstring representing whether that term has a $Z$ in the exponent or not. This matrix is called the parity matrix of a phase polynomial because it has the property that the rotation is applied to the state if and only if the parity of the qubits that interact with the term is $1$. 

Moreover, we can decompose this representation by adding CNOT gates until any of the columns only interacts with a single qubit. At that point, the columns representing a single qubit gate are removed from the column and the corresponding $R_Z(\alpha)$ gate is synthesized (see e.g. \cite{graysynth, steinergray}). When a CNOT gate is synthesized, the parity matrix is updated using binary row addition (modulo $2$). The row corresponding to the target of the CNOT is added to that of the control of the qubit. Essentially, a CNOT toggles the participation of the control qubit depending on the participation of the target qubit in each term. 

Note that the size of the parity matrix is restricted to maximally $2^n-1$ columns and exactly $n$ rows for a phase polynomial with $n$ qubits. This is because the bistrings in the parity matrix are all unique. If they are not, then there are multiple multi-qubit Z-interactions acting on the same qubit. In that case, we can replace them by a single multi-qubit interaction where the angle is sum of the angles of the original interacitons. Additionally, the all-zero bitstring does not exist in the parity matrix because that corresponds to an operation that does not act on any qubits.

\subsection{Upper bounds}\label{subsec:upper}
We start by proving the CNOT upper bounds for phase polynomial synthesis.
\begin{theorem}[Unconstrained phase polynomial synthesis CNOT upper bound]\label{thm:unconstrained-upper}
    A phase polynomial with $g$ terms acting on $n$ qubits can be implemented by a quantum circuit using no more than $\mathcal{O}(gn)$ CNOT gates.
\end{theorem}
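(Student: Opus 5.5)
The plan is a naive term-by-term synthesis that uses the parity-matrix picture of Section~\ref{subsec:prelim}. Since all terms of a phase polynomial commute, we may handle the $g$ columns of the parity matrix one at a time, in any order. The goal is to spend $\mathcal{O}(n)$ CNOTs per column, including the cost of undoing any linear transformation we introduce, so that the total is $g\cdot\mathcal{O}(n)=\mathcal{O}(gn)$.

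For a single term with bitstring $b\in\{0,1\}^n$, let $S=\{i : b_i=1\}$ with $|S|=k\le n$, and pick any qubit $t\in S$ as the target.
\begin{enumerate}
\item Apply a CNOT from each $i\in S\setminus\{t\}$ onto $t$. This costs $k-1$ CNOTs and leaves qubit $t$ holding the parity $\bigoplus_{i\in S}x_i$. In the row-addition language of the paper, the column of this term has been reduced to a unit vector.
\item Apply $R_Z(\alpha)$ on $t$. This realises $e^{-i\frac{\alpha}{2}Z^{\otimes S}}$, since the phase depends only on that parity.
\item Apply the same $k-1$ CNOTs again. CNOTs sharing a target commute and each is self-inverse, so this uncomputes the parity and restores the computational basis state exactly.
\end{enumerate}
Each term therefore costs at most $2(n-1)$ CNOTs. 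Because the circuit returns to the identity linear map after every term, the terms are independent of one another. The full circuit is the product of the $g$ gadgets, and their commutativity guarantees that the product equals the phase polynomial.

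The argument can be phrased directly in the paper's row-addition framework: each gadget performs row additions until one column is a unit vector, removes that column with an $R_Z$, and then reverses the additions. The final parity matrix is then the identity, so no extra linear-reversible CNOT layer is needed. If one instead omits the uncomputation and lets the linear map drift, one must add a final CNOT circuit restoring the identity. By Gaussian elimination that circuit costs $\mathcal{O}(n^2)$ CNOTs, which would break the bound when $g<n$. The uncomputing gadget avoids this issue.

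The only step needing real care, and hence the main potential obstacle, is the correctness of the single-term gadget. We must check that conjugating $R_Z$ by the CNOT fan-in gives exactly the multi-qubit $Z$-rotation up to global phase. We also need to confirm that $g\le 2^n-1$ imposes no hidden dependence, since the bound is stated in terms of $g$ and $n$ jointly. Both follow from the action on basis states $|x\rangle\mapsto e^{\mp i\alpha/2}|x\rangle$ according to the parity $\bigoplus_{i\in S}x_i$. Summing the per-term costs gives at most $2g(n-1)=\mathcal{O}(gn)$ CNOTs, which proves Theorem~\ref{thm:unconstrained-upper}.
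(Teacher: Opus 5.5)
Your proof is correct and is essentially the paper's first argument: decompose each term naively with a CNOT ladder and its uncomputation, costing $2(h(p_i)-1)\le 2(n-1)$ CNOTs per term, which sums to $\mathcal{O}(gn)$. The paper also sketches a second route through the parity matrix, with a final linear-reversible cleanup bounded by direct uncomputation, but your per-term gadget already suffices.
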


\begin{proof}
    We can prove this in two different ways: 
    \begin{itemize}
        \item We can use naive decomposition of each term $p_i$. This will take $2(h(p_i)-1) \leq \mathcal{O}(n)$ CNOTs per term, resulting in $\mathcal{O}(gn)$ CNOTs in total. Where $h(p)$ is the Hamming weight of the parity string describing the term.
        \item Following the logic used in the author's PhD thesis~\cite{thesis}, we need at most 1 CNOT per row and per column in the parity matrix, resulting in $\mathcal{O}(gn)$ CNOTs. However, this leaves the qubits in an entangled state. This can be undone by uncomputing the CNOTs or by synthesizing the linear reversible operation in $\mathcal{O}(\frac{n^2}{\log n})$~\cite{patelmarkovhayes}. Note that synthesis should not take more CNOTs than direct uncomputation of the previously generated CNOTs as this is a circuit that implements the desired action. Thus, phase polynomial synthesis requires at most $\mathcal{O}(gn)$ CNOTs in total.  
        \end{itemize}
\end{proof}

This also holds in the architecture constrained case.

\begin{theorem}[Constrained phase polynomial synthesis CNOT upper bound]\label{thm:constrained-upper}
    A phase polynomial with $g$ terms acting on $n$ qubits can be implemented by a quantum circuit using no more than $\mathcal{O}(gn)$ CNOT gates where each CNOT fits the connectivity constraints specified by some coupling graph $G$.
\end{theorem}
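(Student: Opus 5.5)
The plan is to mimic the naive term-by-term decomposition from the first bullet in the proof of Theorem~\ref{thm:unconstrained-upper}, but to replace the ``ladder'' of CNOTs onto an arbitrary target by a ladder along a spanning tree of the coupling graph $G$. We assume $G$ is connected; otherwise some multi-qubit term could not be implemented at all. Fix a spanning tree $T$ of $G$ once, for all terms. For a single term $p_i$, let $S_i$ be the set of qubits with a $Z$ in $p_i$, so that $|S_i| = h(p_i)$. Choose any root $r \in S_i$ and let $T_i$ be the minimal subtree of $T$ containing $S_i$ (a Steiner tree inside $T$). Then $T_i$ has at most $n$ vertices and at most $n-1$ edges, and every edge of $T_i$ is an edge of $G$.

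The synthesis of $p_i$ proceeds in three steps.
\begin{enumerate}
\item \emph{Clear the Steiner vertices.} Make the vertices of $T_i$ not in $S_i$ carry parity $0$ with respect to the term. We do this by traversing $T_i$ from the leaves inward: for each Steiner vertex $v$ with a child $c$ in $S_i$, a CNOT along the edge $(v,c)$ adds $c$'s row into $v$ so that $v$ joins the parity. This costs at most one CNOT per Steiner vertex.
\item \emph{Accumulate the parity at the root.} In post-order, for every non-root vertex $u$ of $T_i$, apply a CNOT along the tree edge from $u$ to its parent, with $u$ as target row added into the parent, in the row-addition convention of Section~\ref{subsec:prelim}. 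After this, the column for $p_i$ has a single $1$ at $r$, and we apply $R_Z(\alpha_i)$ on $r$. This uses $|E(T_i)| \le n-1$ CNOTs, all of them allowed.
\item \emph{Uncompute.} Apply the CNOTs of the first two steps in reverse order, restoring the parity matrix of the remaining terms exactly. This doubles the count.
\end{enumerate}

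Each term therefore costs at most $2\cdot 2(n-1) = \mathcal{O}(n)$ allowed CNOTs. Summing over the $g$ terms gives $\mathcal{O}(gn)$, independently of the shape of $G$.

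The main obstacle is the correctness bookkeeping for the Steiner vertices: a CNOT that routes parity through a vertex $v \notin S_i$ must not leave $v$'s own variable in the final parity at $r$. I would handle this with the standard trick from Steiner-tree based CNOT synthesis (as in \cite{steinergray}). First, fill each Steiner vertex so that every vertex of $T_i$ is ``in'' the term. Then the leaves-to-root elimination produces the XOR of the whole subtree at $r$. This equals the parity of $S_i$ only if the Steiner vertices' contributions cancel, and it is exactly the filling step that makes them cancel: each Steiner vertex $v$ holds $x_v \oplus x_c$ and then absorbs its subtree, so $x_v$ ends up counted twice.

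If this bookkeeping gets messy, a cleaner fallback is to conjugate the term with a CNOT cascade that computes the prefix parity along a path in $T_i$ and then uncomputes it. Any such construction uses only $\mathcal{O}(|V(T_i)|) = \mathcal{O}(n)$ gates per term, and that per-term linear bound is all the theorem needs.
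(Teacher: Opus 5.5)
Your overall route is the paper's first argument: a Steiner tree for each term, one CNOT per Steiner node and one per tree edge, a rotation at the root, then uncomputation. This gives $\mathcal{O}(n)$ allowed CNOTs per term and $\mathcal{O}(gn)$ in total. Your per-term count $4(n-1)$ is the right one; the paper's $4(h(p_i)-1)$ overlooks that Steiner nodes can far outnumber the terminals.

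The problem is in the step you flagged as the main obstacle. As written, Steps 1 and 2 both add a child's row into its parent's row, and with that orientation the procedure does not isolate the term's parity. Take the path $r - v - s$ with $S_i=\{r,s\}$, so the column of $p_i$ is $(q_r,q_v,q_s)=(1,0,1)$. Step 1 adds row $s$ into row $v$, giving $(1,1,1)$. Step 2 as you wrote it adds row $s$ into row $v$, giving $(1,0,1)$, and then row $v$ into row $r$, which leaves $(1,0,1)$. Two ones remain. Your wire-level justification is reversed in the same way. If $v$ held $x_v\oplus x_s$ and then absorbed its subtree, the root would receive $x_r\oplus x_v$: the terminal cancels and the Steiner variable survives.

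The repair costs nothing. Keep Step 1 (control $v$, target $c$). In the paper's convention this adds $c$'s row into $v$'s, and on the wires it is the \emph{child} that ends up carrying $x_c\oplus x_v$. In Step 2, take $u$ as control and its parent as target, so that the parent's row is added into $u$'s.

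The bookkeeping is then immediate in the column picture. After filling, every vertex of $T_i$ has a $1$ in column $p_i$ and every other vertex has a $0$. Each post-order CNOT clears $u$, since both entries are $1$, while its not-yet-processed parent keeps its $1$. So only $r$ retains a $1$. In the example this gives $(1,1,0)$ and then $(1,0,0)$, and on the wires $x_r\oplus x_v\oplus(x_s\oplus x_v)=x_r\oplus x_s$.

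Both orientations use the same edge of $G$, so your count is unchanged. Your fallback does not cover the general case either, because $T_i$ need not be a path.
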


\begin{proof}
    Similar to the unconstrained case, we can prove this in two different ways: 
    \begin{itemize}
        \item We can use naive decomposition of each term. We do this by building a Steiner tree over the interacting qubits in the term to determine the CNOT ladder. The CNOT ladder will have 1 CNOT for each edge in the Steiner tree and 1 CNOT per Steiner node in the Steiner tree. This will take no more than $4h(p_i-1)\leq\mathcal{O}(n)$ CNOTs per term, resulting in $\mathcal{O}(gn)$ CNOTs in total.
        \item Following the logic used in the author's PhD thesis~\cite{thesis}, we need at most 2 CNOTs per row and per column in the parity matrix, resulting in $2gn$ CNOTs. Then, the qubits can be disentangled by uncomputing the CNOTs directly ($+gn$ CNOTs) or they can be synthesized using $\mathcal{O}(\frac{n^2}{\log \delta})$~\cite{SBE}. Thus, synthesis takes at most $\mathcal{O}(gn)$ CNOTs.
    \end{itemize}
\end{proof}

\subsection{Lower bounds}\label{subsec:lower}
In this section, we prove a tighter CNOT lower bound for phase polynomial synthesis than previously known.
We do this by proving that there exists a phase polynomial that needs at least that many CNOTs, similar to Patel et al.~\cite{patelmarkovhayes}. 
This formulation of a lower bound trivially defines lower bounds of $\Omega(0)$ (e.g. only single qubit gates), $\Omega(n)$ (e.g. one term involving all qubits),  $\Omega(\frac{n^2}{\log n})$ (e.g. $n$ linearly independent terms~\cite{patelmarkovhayes}), and $\Omega(2^n)$ (e.g. all possible distinct terms over $n$ qubits~\cite{graysynth}). 
However, we prove a CNOT lower bound that is well behaved for most of these instances.
Additionally, we use $\mathcal{O}$-notation in favor of $\Omega$-notation to account for particular phase polynomial instances that happen to require less CNOTs, such as phase polynomials with only single qubit gates, or phase polynomials that can be written as two parallel phase polynomials on a subset of qubits. Hence, we technically also prove the upper bound of the CNOT lower bound of phase polynomials.

The general outline of the proof is based on the fact that CNOTs entangle and disentangle qubits. In the phase polynomial, this means that CNOT can add or remove a qubit from a term in the phase polynomial. To synthesize a phase polynomial, every qubit needs to be disentangled from every term in the phase polynomial at least once. 

\begin{lemma}\label{lem:disconnect-unconstrained}
    Disconnecting a qubit $q$ from all terms in a phase polynomial with $g$ terms requires at least $\mathcal{O}(\frac{g}{\max(\log g, 1)})$ CNOTs.    
\end{lemma}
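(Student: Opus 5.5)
Because the lemma uses $\mathcal{O}$ in the sense the paper describes (a bound that some phase polynomial attains), I will prove it with a counting argument in the style of Patel et al.~\cite{patelmarkovhayes}. I will exhibit, or show by counting that there must exist, a phase polynomial with $g$ terms in which qubit $q$ cannot be removed from every term with fewer than $c\,g/\max(\log g,1)$ CNOTs.

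The setup uses the parity-matrix view of Section~\ref{subsec:prelim}. Row $q$ of the parity matrix records which of the $g$ terms involve $q$. Only a CNOT whose control is $q$ changes row $q$: it adds the target's row to row $q$ modulo $2$. A CNOT with $q$ as target changes another row by adding row $q$ to it, and this can later be fed back into row $q$. Disconnecting $q$ from all terms means driving row $q$ to the all-zero vector, or equivalently applying a CNOT circuit $C$ after which no column has a $1$ in row $q$. So I will count the distinct effects that circuits of $k$ CNOTs can have on the parity matrix. There are at most $n(n-1)$ choices per CNOT, so at most $(n(n-1))^k$ circuits of length $k$. On the other side, I will count how many parity matrices with $g$ distinct nonzero columns have pairwise different row $q$ modulo the span of the remaining rows. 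Each circuit of length $k$ can zero out row $q$ for only a bounded class of these matrices.

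For the key step I restrict to $n'\approx\log_2 g+1$ qubits, so that the $g$ terms can realise essentially arbitrary row vectors. I will fix the other $n'-1$ rows to enumerate all distinct bitstrings and let row $q$ be a free vector $r\in\{0,1\}^g$. A circuit of length $k$ can only make row $q$ equal to $r$ plus an element of the span of the other rows, and that span has dimension at most $n'-1$. Hence the number of row vectors $r$ that some circuit of length at most $k$ can cancel is at most $(n'^2)^k\cdot 2^{n'-1}$. Requiring this to cover all $2^g$ choices of $r$ gives $k\log(n'^2)+n'\ge g$, that is, $k\ge (g-\log g)/(2\log\log g)$. I then need to sharpen this to $g/\log g$. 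The way to do so is to count with the true alphabet of $g$ columns, so that each CNOT contributes roughly $\log g$ bits of choice; the ratio of $g$ bits to be erased over $\mathcal{O}(\log g)$ bits per gate then gives $\Omega(g/\log g)$. For $g=1$ the $\max$ makes the bound trivially $\Omega(1)$, since at least one CNOT is needed whenever the term has weight at least $2$.

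\textbf{Main obstacle.} The hard step is making the information-per-CNOT estimate honest. The $n$ qubits may be many more than $\log g$, and then the natural count gives $\log n$ bits per gate rather than $\log g$. My first attempt is to embed the hard instance on only $\mathcal{O}(\log g)$ qubits, leaving the rest idle, which is allowed because the lemma only asserts existence. I will then argue that CNOTs touching idle qubits cannot help, since the rows of idle qubits are zero on these terms and so any such CNOT is wasted.
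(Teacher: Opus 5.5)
\textbf{There is a genuine gap, at the counting step, and it cannot be repaired within your formalization.}

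A CNOT circuit acts on the parity matrix as $P\mapsto AP$ with $A$ invertible mod $2$, so the final row $q$ is some nonzero combination of the original rows. In your instance the other $n'-1$ rows are linearly independent, because the restricted columns span $\{0,1\}^{n'-1}$. So row $q$ can be zeroed if and only if $r=\sum_{j\in S}\mathrm{row}_j$ for some set $S$ of other qubits. In that case the $|S|\le n'-1$ CNOTs with control $q$ and targets $j\in S$ already do it. For every other $r$ the parity matrix has full row rank, hence so does $AP$, and no circuit of any length zeroes row $q$.

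The requirement that circuits of length at most $k$ `cover all $2^g$ choices of $r$' is therefore unfounded: your inequality only detects the $2^g-2^{n'-1}$ infeasible instances. Counting honestly over the $2^{n'-1}$ feasible ones (each circuit cancels at most one $r$) gives only $k=\Omega(\log g/\log\log g)$. This is consistent with the true worst case of at most $n'-1\approx\log_2 g$ CNOTs, and far below $g/\log g$.

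Your proposed sharpening fails for the same reason. A CNOT is a choice of an ordered pair of qubits, about $\log(n(n-1))$ bits and independent of $g$; it is not a choice among $g$ columns. And because it adds a whole row, it can clear arbitrarily many of the $g$ entries at once. So `$g$ bits to erase at $\mathcal{O}(\log g)$ bits per gate' is not a valid information count.

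More broadly, in your model disconnecting $q$ costs $\mathcal{O}(n)$ whenever it is possible at all. Switching to the sequential notion the paper actually uses does not rescue a genuine $\Omega$ bound either. In that notion each term must lose $q$ at \emph{some} moment, with different groups of terms handled by different CNOTs. Apply and then undo the CNOT with control $q$ and target $j$, for each $j\neq q$. This removes $q$ at some moment from every term of weight at least two containing it, using $2(n-1)$ CNOTs, which is below $g/\log g$ once $g\gg n\log n$. Embedding the instance on $\mathcal{O}(\log g)$ qubits, as you plan, puts you exactly in that regime.

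The paper's proof is of a different kind. It splits the terms by whether they contain $q$, then splits those by a second qubit $q_2$. It places one CNOT between $q$ and $q_2$ to disconnect $q$ from the terms containing both, and repeats with $q_3$ on the remainder. It then asserts that $\mathcal{O}(\log g)$ rounds of splitting leave $\mathcal{O}(g/\log g+1)$ groups at one CNOT each, treating $g=1$ separately. This follows its stated convention of writing $\mathcal{O}$ for the ``upper bound of the CNOT lower bound''. It counts the CNOTs of one procedure rather than showing that every circuit needs them, so it offers no counting step you could borrow to close your gap.
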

\begin{proof}
    We prove this bound constructively, following a similar recursive strategy as is used in the GraySynth algorithm~\cite{graysynth} and a similar divide and conquer strategy used by Patel et al.~\cite{patelmarkovhayes}. First, we can group the term in the phase polynomial into terms that interact with qubit $q$ ($g_1$) or not ($g_0$). The terms that do not interact with this qubit ($g_0$) do not need a CNOT involving the qubit as they are already disconnected.

    The remaining terms ($g_1$) need a CNOT to implement the desired entanglement. So, we choose another qubit ($q_2$) to facilitate this entanglement. Again, some of the terms interact with the second qubit ($g_{11}$) and others do not ($g_{10}$). We can place 1 CNOT between the two chosen qubits to create the entanglement for the $g_{11}$ terms. This disconnects qubit $q$ from the terms in $g_{11}$ but it keeps the terms $g_{10}$ entangled with $q$. Thus, we need to find a third qubit ($q_3$) and do the same for the remaining $g_{10}$ terms, adding 1 more CNOT. We continue this splitting procedure until there are no more terms left.

    Using this procedure, we split the $g$ terms into smaller and smaller sets of terms where each set results in 1 CNOT. We can split the $g$ terms $\mathcal{O}(\log g)$ times, resulting in $\mathcal{O}(\frac{g}{\log g}+1)$ sets. The first set ($g_0$) does not require any CNOTs. Additionally, if $g=1$ then we divide by zero. In that case, qubit $q$ should be connected, and we need one CNOT to disconnect it because if the qubit was not interacting with the only term in the phase polynomial, it was not part of the phase polynomial. 
    Thus, we need at least $\mathcal{O}(\frac{g}{\max (\log g, 1)})$ CNOTs to disconnect one qubit from all terms in the phase polynomial.
\end{proof}

We use this to prove the CNOT lower bound for unconstrained phase polynomial synthesis.

\begin{theorem}[Unconstrained phase polynomial CNOT lower bound]\label{thm:unconstrained-lower}
    There exists an $n$ qubit phase polynomial with $g$ terms needs to be implemented with at least $\mathcal{O}(\frac{gn}{\max (\log g, 1)})$ CNOTs.
\end{theorem}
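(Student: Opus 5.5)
The plan is to lift Lemma~\ref{lem:disconnect-unconstrained} from a single qubit to all $n$ qubits. The key points are to choose a phase polynomial in which every qubit must be disconnected from $\Theta(g)$ terms, and to show that the per-qubit costs cannot be shared too much. First I would fix the instance: take $g$ distinct nonzero parity strings over $n$ qubits, arranged so that every row of the parity matrix has $\Theta(g)$ ones. For example, take strings of Hamming weight about $n/2$ when $g\le\binom{n}{n/2}$, or all strings when $g$ is close to $2^n$. This guarantees that each qubit $q$ interacts with $g_1(q)=\Theta(g)$ terms. By the argument of Lemma~\ref{lem:disconnect-unconstrained}, disconnecting $q$ from these terms then involves $\mathcal{O}(\frac{g}{\max(\log g,1)})$ CNOTs that touch $q$.

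Next I would sum over the qubits. Every CNOT acts on exactly two qubits, so a single CNOT is counted at most twice when the per-qubit counts are added. Summing the per-qubit requirement over all $n$ qubits and dividing by two therefore gives at least $\frac{1}{2}\, n\cdot\mathcal{O}(\frac{g}{\max(\log g,1)})=\mathcal{O}(\frac{gn}{\max(\log g,1)})$ CNOTs. The case $g=1$ is handled separately: a single term on all $n$ qubits needs $n-1$ CNOTs to bring its parity onto one wire, which matches $\frac{gn}{1}$ up to constants. For the range $g\ge 2^{n}/\mathrm{poly}$, I would check consistency with the known $\Omega(2^n)$ bound for the full phase polynomial~\cite{graysynth}, since there $\log g\approx n$.

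The main obstacle is the double-counting step. It must be justified that the CNOTs counted for qubit $q$ in Lemma~\ref{lem:disconnect-unconstrained} really are CNOTs incident to $q$, and that one CNOT cannot simultaneously serve the splitting for many qubits beyond its two endpoints. For the first point, I would argue that a term's support on $q$ only changes when a CNOT has $q$ as a control (row $q$ changes only then). For the second, I would note that each CNOT changes exactly one row of the parity matrix. Hence each gate advances the disconnection of only one qubit, which even removes the factor of two.

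As a sanity check, or as a fallback if the direct argument resists formalization, I would attempt a counting argument in the style of Patel et al.~\cite{patelmarkovhayes}. This would compare the number of distinct circuits with $k$ CNOTs, at most $(n^2)^k$ times the $R_Z$ placements, against the number of choices of $g$ distinct parity strings, about $\binom{2^n-1}{g}\approx 2^{gn}/g!$. Solving for $k$ should recover $k=\Omega(\frac{gn-g\log g}{\log n})$. This is comparable to the claimed bound in the regime $\log g\sim\log n$, and I would use it to confirm that the target bound is of the right order.
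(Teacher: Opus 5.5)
Your main route is the paper's own. The paper's proof applies Lemma~\ref{lem:disconnect-unconstrained} to each of the $n$ qubits and multiplies, and you do the same with an extra halving for CNOTs counted at both endpoints. The genuine gap, which you inherit from the paper, is that Lemma~\ref{lem:disconnect-unconstrained} is not a lower bound. Its proof exhibits one splitting strategy that uses about $g/\log g$ CNOTs, so it shows that this many CNOTs can \emph{suffice}, not that every circuit \emph{needs} that many with $q$ as control. The obstacle you single out (double counting) is therefore not the real one, and your patch does not address the real one. Knowing that only the control's row changes says nothing about how many entries of row $q$ one CNOT clears: the row added is the target's current, already modified row, and it may agree with row $q$ on arbitrarily many columns. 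Your instance condition does not help either. If two rows of the parity matrix coincide (this is allowed, since only columns must be distinct), a single CNOT empties row $q$ even though it has $\Theta(g)$ ones. Moreover, ``disconnecting every qubit from every term'' is not a necessary condition, because a term is implemented as soon as its parity sits on some wire, possibly $q$'s own.

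No choice of instance can repair this for all $g$, because, read as an $\Omega$-bound, the statement fails whenever $g\to\infty$ but $\log g=o(\log n)$. Take $g=\lfloor\frac{1}{2}\log_2 n\rfloor$ and any phase polynomial with $g$ terms. First fold all qubits with the same row pattern in $\{0,1\}^g$ onto one representative, using at most $n$ CNOTs. Each term is then a parity of at most $2^g\le\sqrt n$ representatives, so compute--$R_Z$--uncompute costs at most $2\sqrt n$ CNOTs per term. Finally unfold. The total is $\mathcal{O}(n+g\sqrt n)=\mathcal{O}(n)$ CNOTs, whereas $\frac{gn}{\log g}=\Theta(\frac{n\log n}{\log\log n})$. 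Doing this blockwise gives $\mathcal{O}(n+\frac{gn}{\log n})$ for every $g$.

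The sound tool is your fallback counting argument. Using that a $k$-CNOT circuit exposes at most $n+k$ distinct wire parities, it gives $\Omega(\frac{g(n-\log g-\mathcal{O}(\log n))}{\log n})$. This implies the stated bound for $n\le g\le 2^{n/2}$, where $\log g\ge\log n$, and up to a constant for $g\ge n^{\epsilon}$. The observation that each CNOT creates at most one new wire parity gives $k\ge g-n$, which covers $g\ge 2^{n/2}$. Connectivity gives $n-1$ for constant $g$. So the theorem can be proved in exactly those regimes, but by counting, not by lifting Lemma~\ref{lem:disconnect-unconstrained}.
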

\begin{proof}
    From Lemma~\ref{lem:disconnect-unconstrained} we know that each qubit needs at least $\mathcal{O}(\frac{g}{\max (\log g, 1)})$ CNOTs to disconnect the qubit from every term.
    This must be done for every qubit in the phase polynomial, resulting in at least $\mathcal{O}(\frac{gn}{\max (\log g, 1)})$ CNOTs.
\end{proof}

Note that this CNOT lower bound corresponds to the CNOT counts for some phase polynomials with known optimal decompositions. Namely,
\begin{itemize}
    \item a phase polynomial with a single term ($g = 1$) involving all qubits: \\
    $\mathcal{O}(\frac{1n}{\max(\log 1,1)}) = \mathcal{O}(n)$,
    \item a phase polynomial with $g = n$ linearly independent terms: \\$\mathcal{O}(\frac{n\cdot n}{\max(\log n,1)}) = \mathcal{O}(\frac{n^2}{\log n})$,
    \item a phase polynomial with all $g = 2^n$ terms: \\$\mathcal{O}(\frac{2^nn}{\max(\log 2^n,1)}) = \mathcal{O}(2^n) = \mathcal{O}(g)$
\end{itemize}

Next, we consider the hardware constrained case of phase polynomial synthesis where CNOTs are only allowed between certain pairs of qubits. 

\begin{observation}\label{obs:constrained>unconstrained}
    The CNOT lower bounds in the constrained case is at least the CNOT lower bounds for the unconstrained case.
    Otherwise, we could trivially constrain the unconstrained case to do better.
\end{observation}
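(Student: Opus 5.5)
The plan is a direct reduction argument: the constrained problem is a restriction of the unconstrained one. Fix an $n$ qubit phase polynomial $P$ with $g$ terms and a coupling graph $G$ on the $n$ qubits. Let $C_G(P)$ be the minimum number of CNOTs in a circuit implementing $P$ in which every CNOT acts on an edge of $G$. Let $C(P)$ be the minimum over all circuits with no connectivity restriction. I will show $C(P) \leq C_G(P)$ for every $P$ and every $G$. Any lower bound for the unconstrained case then transfers immediately, and in particular so does the bound of Theorem~\ref{thm:unconstrained-lower}.

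First, take an optimal constrained circuit for $P$ with $C_G(P)$ CNOTs. Every CNOT in it is allowed on $G$. Every pair of qubits is allowed in the unconstrained setting, which is equivalent to taking the complete graph $K_n$. So the same circuit is a valid unconstrained implementation of $P$, and hence $C(P) \leq C_G(P)$.

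Second, I will phrase this in the contrapositive form suggested by the observation. Suppose the constrained lower bound were smaller than the unconstrained one for some instance. Then some constrained circuit would use fewer CNOTs than the minimum unconstrained circuit. But that circuit is itself unconstrained, which is a contradiction. Taking the existential instance from Theorem~\ref{thm:unconstrained-lower}, the same phase polynomial requires at least $\mathcal{O}(\frac{gn}{\max(\log g,1)})$ CNOTs on any $G$.

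The only delicate step is the bookkeeping. The constrained circuit may implement $P$ only up to a qubit permutation, for example if the synthesis algorithm ends with a relabelled layout. I would handle this by fixing the convention that both settings implement $P$ exactly, or both up to the same output permutation. Either way the feasible set of constrained circuits is a subset of the feasible set of unconstrained circuits, so the minimum can only increase. With that convention fixed, the argument is a one-line set inclusion.
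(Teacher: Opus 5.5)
Your proposal is correct and is essentially the paper's own argument. The paper justifies the observation with the single line ``otherwise, we could trivially constrain the unconstrained case to do better,'' which is your point that every $G$-compliant circuit is also a valid unconstrained circuit, so $C(P) \leq C_G(P)$. Your set-inclusion phrasing and the convention of using the same output permutation in both settings make that one-liner precise, and the same inclusion argument also covers the per-qubit disconnection count to which the paper applies the observation in Corollary~\ref{cor:disconnect-constrained-least}.
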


Intuitively, the constrained case should require more CNOTs. Thus, we repeat the analysis for the CNOT lower bound for the constrained case to see how many more CNOTs are needed than in the unconstrained case.

\begin{lemma}\label{lem:disconnect-constrained-most}
    Disconnecting a qubit $q$ from all $g$ terms of a phase polynomial takes \textbf{at most} $\mathcal{O}(\frac{g}{\max(\log g, 1)})$ CNOTs where each CNOT adheres to the connectivity constraints specified by a coupling graph $G$.
\end{lemma}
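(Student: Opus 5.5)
We will prove this constructively, mirroring the splitting procedure from the proof of Lemma~\ref{lem:disconnect-unconstrained}. The unconstrained construction only ever places CNOTs between $q$ and some chosen partner qubit $q_j$. Each such CNOT disconnects $q$ from one block of terms. The plan is to keep this structure and show that the choice of partner can always be made among neighbours of $q$ in $G$, or else reached at constant extra cost per block. This keeps the total at $\mathcal{O}(\frac{g}{\max(\log g,1)})$.

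First, as before, we split the $g$ terms into $g_0$ (terms not containing $q$, which need no CNOT) and $g_1$. Next, we restrict the choice of partner qubit to a neighbour $v$ of $q$ in $G$. Every term in $g_1$ either already involves $v$ or does not. If it does, a single CNOT on the edge $(q,v)$ removes $q$ from it.

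Terms containing $q$ but none of its neighbours need different handling. Since we only need $q$ to be disconnected, not the whole term synthesized, we first use a CNOT on an edge $(v,w)$ further along $G$ to add $v$ to the parity of that block. Then one CNOT on $(q,v)$ removes $q$. This is the Steiner-tree idea from the first bullet of the proof of Theorem~\ref{thm:constrained-upper}, applied per block rather than per term.

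Each block therefore costs $\mathcal{O}(1)$ CNOTs, and the number of blocks is governed by the same recursive halving argument as in Lemma~\ref{lem:disconnect-unconstrained}. That argument gives $\mathcal{O}(\frac{g}{\log g}+1)$ blocks. The case $g=1$ is handled exactly as before with one or a constant number of CNOTs, which yields the $\max(\log g,1)$ denominator.

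The main obstacle is the middle step. Propagating parity toward $q$ along $G$ could, in the worst case, require a path of length up to the diameter of $G$, not a constant. To control this, I would first try to argue that a path used once can be reused across all blocks that share the same route. The extra CNOTs would then be counted once per row of the parity matrix rather than once per block. That would make them an additive term absorbed into the bound, analogous to the ``at most 2 CNOTs per row and per column'' counting from the thesis argument quoted in Theorem~\ref{thm:constrained-upper}. If that fails, the fallback is to charge the path cost to the other qubits' disconnection. Their CNOTs are counted anyway in the eventual $n$-qubit theorem, so this would still give the claimed bound for $q$ itself.
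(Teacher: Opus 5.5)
There is a genuine gap, and it is the one you flag yourself. With $q$ held fixed, a block whose terms contain no neighbour of $q$ can only be handled by carrying a parity to a neighbour $v$ along a path in $G$. That path can be as long as the diameter of $G$. So the sentence ``each block therefore costs $\mathcal{O}(1)$ CNOTs'' does not follow from what precedes it: if routes are rebuilt per block, the cost is the sum of the distances, not the number of blocks.

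Moreover, the CNOTs along such a path act on entire rows. They therefore also change the blocks you have not processed yet, and the proposal neither tracks nor undoes this. Neither remedy closes the gap:
\begin{itemize}
\item Reusing routes ``once per row'' is asserted, not shown.
\item Charging the path cost to the other qubits' disconnection changes the statement. The lemma is a bound for a single qubit $q$, so at best this would feed the aggregate count in Theorem~\ref{thm:constrained-lower}, not this lemma.
\end{itemize}

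The missing idea is to move $q$ instead of moving parities toward it; this is a concrete way to realize your reuse intuition. The paper fixes a spanning tree $G'$ of $G$ rooted at $q$ and walks $q$ through it depth-first.
\begin{itemize}
\item When stepping to a child $q_c$, it places the CNOT on $(q,q_c)$ that disconnects $q$ from the remaining terms containing $q_c$, and then swaps $q$ and $q_c$. A CNOT followed by a SWAP on the same pair is just two CNOTs, so each step costs $2$. The next partner is then always adjacent to $q$'s current position.
\item When backtracking from a leaf, the CNOTs on that edge are uncomputed, costing another $2$. This is harmless because the terms still under consideration are exactly those that those CNOTs did not disentangle.
\end{itemize}
Thus each split costs at most $4$ CNOTs, and the counting of Lemma~\ref{lem:disconnect-unconstrained} carries over with only the constant changed. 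No long-range parity propagation is ever needed.
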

\begin{proof}
    We can disconnect every term in a phase polynomial under connectivity constraints by combining the divide and conquer strategy from Lemma~\ref{lem:disconnect-unconstrained} with depth-first search.
    
    First, we define $G'$ to be some spanning tree over $G$ with qubit $q$ as root. Similar to Lemma~\ref{lem:disconnect-unconstrained}, we split the terms $g$ into $g_1$ and $g_0$ depending on whether qubit $q$ interacts with the term of not, respectively. Then, we pick some second qubit $q_c$ that is a child of qubit $q$ in $G'$ and split the terms $g_1$ into $g_{11}$ and $g_{10}$ depending on whether the terms interact with $q_c$ or not, respectively. Then, we want to do two things: (1) disconnect qubit $q$ from the terms in $g_{11}$, and (2) SWAP qubits $q$ and $q_c$. This can be done with two CNOTs. We continue to traverse graph $G'$ with $q$, taking only 2 CNOTs and splitting the remaining terms at each step. 
    When we reach a leaf and need to go back, we uncompute the previously placed CNOTs. This works because at the point of uncomputation, we are only considering terms of the phase polynomial that did not disentangle qubit $q$ when those CNOTs were originally placed. 

    Using this strategy, we place at most 4 CNOTs (2 for going down an edge and 2 for going up an edge in $G'$) for every time we split the terms in the phase polynomial. Thus, following the logic in Lemma~\ref{lem:disconnect-unconstrained}, this takes $\mathcal{O}(\frac{g}{\max(\log g, 1)})$ CNOTs.
\end{proof}

\begin{comment}
    TODO: Figure CNOT + SWAP = 2CNOT
\end{comment}

\begin{corollary}\label{cor:disconnect-constrained-least}
    Disconnecting a qubit $q$ from all $g$ terms of a phase polynomial takes \textbf{at least} $\mathcal{O}(\frac{g}{\max(\log g, 1)})$ CNOTs where each CNOT adheres to the connectivity constraints specified by a coupling graph $G$.
\end{corollary}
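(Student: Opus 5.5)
The plan is a reduction from the constrained setting to the unconstrained one, in the spirit of Observation~\ref{obs:constrained>unconstrained}, with the lower bound imported from Lemma~\ref{lem:disconnect-unconstrained}. The whole argument is about inclusion: every circuit that obeys the coupling graph $G$ is also an unconstrained circuit. Concretely, I take an arbitrary CNOT circuit $C$ whose CNOTs all lie on edges of $G$ and which disconnects qubit $q$ from all $g$ terms. Dropping the constraint, $C$ is also an unconstrained circuit that disconnects $q$ from those terms.

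The steps, in order, are as follows.
\begin{enumerate}
\item Fix $q$ and a set of $g$ terms: the same worst-case family that Lemma~\ref{lem:disconnect-unconstrained} uses for its lower bound.
\item Let $C$ be any $G$-compliant circuit of minimum size that disconnects $q$, and let $|C|$ be its CNOT count.
\item Observe that the set of $G$-compliant circuits is a subset of the set of unconstrained circuits. Hence the minimum over the constrained set is at least the minimum over the unconstrained set.
\item Conclude from Lemma~\ref{lem:disconnect-unconstrained} that $|C|$ is at least $\mathcal{O}(\frac{g}{\max(\log g,1)})$.
\item Combine this with Lemma~\ref{lem:disconnect-constrained-most}. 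The upper and lower bounds coincide up to constants, so constraining the hardware costs only a constant factor in this step.
\end{enumerate}

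The main obstacle is in step 1. Lemma~\ref{lem:disconnect-unconstrained} is argued constructively, so it is not fully explicit whether its bound holds for \emph{every} circuit on some specific instance. The reduction only transfers a bound that is already valid against all unconstrained circuits. I would therefore read the Lemma as the existential statement its use in Theorem~\ref{thm:unconstrained-lower} requires: there are $g$ terms for which any unconstrained disconnection uses that many CNOTs. The monotonicity argument above then carries over verbatim to the constrained case.

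If a more self-contained argument is wanted, I would add a short counting remark. Each CNOT that involves $q$ can toggle $q$'s membership in the terms according to only a single other row. So after $k$ such CNOTs, $q$'s row lies in the span of at most $k$ other row vectors. For suitably generic terms, killing all $g$ entries forces $k \gtrsim g/\log g$. This remark never mentions $G$, which shows directly that the connectivity constraint plays no role in the lower bound.
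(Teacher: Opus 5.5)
Your inclusion argument is the paper's proof. The paper cites Observation~\ref{obs:constrained>unconstrained} (a $G$-compliant circuit is in particular an unconstrained one, so the constrained minimum dominates) to import the bound of Lemma~\ref{lem:disconnect-unconstrained}, and pairs it with Lemma~\ref{lem:disconnect-constrained-most} to call the bound tight; your caveat that Lemma~\ref{lem:disconnect-unconstrained} is argued only constructively applies equally to the paper. Do not use the optional counting remark to shore this up: the rows added to $q$'s row can be arbitrary combinations prepared by CNOTs that avoid $q$ (so if $q$'s row lies in the span of the others, one CNOT touching $q$ suffices), and row operations preserve rank (so at full row rank $q$'s entries never all vanish at once), hence the span bound gives nothing of order $g/\log g$.
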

\begin{proof}
    From Observation~\ref{obs:constrained>unconstrained}, we know that the CNOT lower bound for disconnecting a single qubit from each term in the phase polynomial is at least $\mathcal{O}(\frac{g}{\max(\log g, 1)})$ and from Lemma~\ref{lem:disconnect-constrained-most}, we know that it takes at most $\mathcal{O}(\frac{g}{\max(\log g, 1)})$ CNOTs. Hence, the bound is tight.
\end{proof}

Repeating this strategy gives the CNOT lower bound for constrained synthesis.

\begin{theorem}[Constrained phase polynomial synthesis CNOT lower bound]\label{thm:constrained-lower}
    There exists an $n$ qubit phase polynomial with $g$ terms that must be implemented with at least $\mathcal{O}(\frac{gn}{\max (\log g, 1)}$ CNOTs where each CNOT adheres to the connectivity constraints specified by some coupling graph $G$.
\end{theorem}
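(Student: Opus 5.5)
The plan mirrors the proof of Theorem~\ref{thm:unconstrained-lower}, with the constrained single-qubit result substituted for the unconstrained one. First I fix a witness: an $n$ qubit phase polynomial whose $g$ terms are spread over all qubits, so that every qubit participates in a constant fraction of the terms. Concretely, I would take the same family that witnesses Theorem~\ref{thm:unconstrained-lower}, since Observation~\ref{obs:constrained>unconstrained} lets us reuse any unconstrained hard instance.

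Next, I apply Corollary~\ref{cor:disconnect-constrained-least} to each qubit $q$. Synthesis must disconnect $q$ from every term it participates in, and this needs at least $\mathcal{O}(\frac{g}{\max(\log g,1)})$ CNOTs that respect the coupling graph $G$. Summing this over all $n$ qubits gives the claimed $\mathcal{O}(\frac{gn}{\max(\log g,1)})$.

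As an alternative route, I could invoke Observation~\ref{obs:constrained>unconstrained} directly on Theorem~\ref{thm:unconstrained-lower}. Any $G$-respecting circuit is also an unconstrained circuit, so the constrained count is bounded below by the unconstrained count, which is exactly the statement. I would present this two-line argument as the main proof and the per-qubit summation as the constructive explanation.

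The main obstacle is the summation step. A single CNOT acts on two qubits, so it could in principle count toward disconnecting both its control and its target, which would make the per-qubit counts overlap. I would handle this by charging each CNOT only to its control qubit, that is, the qubit whose participation row is toggled. Then each CNOT is counted at most once, and the sum loses at most a constant factor of $2$. This factor is absorbed in the asymptotic notation, so the bound $\mathcal{O}(\frac{gn}{\max(\log g,1)})$ is unaffected. Combined with Theorem~\ref{thm:constrained-upper}, this also shows that the constrained and unconstrained cases share the same bounds.
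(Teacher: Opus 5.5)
Your per-qubit summation via Corollary~\ref{cor:disconnect-constrained-least} is exactly the paper's proof. The route you designate as the main one is different, however. Rather than applying Observation~\ref{obs:constrained>unconstrained} per qubit (which is all the lower-bound half of Corollary~\ref{cor:disconnect-constrained-least} does) and then summing, you apply it once to the whole of Theorem~\ref{thm:unconstrained-lower}, reusing its witness, since any $G$-respecting circuit is in particular an unconstrained one. This is cleaner, and it makes explicit that Lemma~\ref{lem:disconnect-constrained-most}, being an upper bound, contributes nothing to this lower bound. The paper's route instead keeps the structure parallel to the unconstrained proof. Next to Lemma~\ref{lem:disconnect-constrained-most}, that makes the per-qubit bound visibly tight, which feeds the paper's later claim that the constrained and unconstrained bounds differ only by a constant.

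The two-line route does not remove the overlap issue you raise; it only relocates it. Theorem~\ref{thm:unconstrained-lower} is itself proved by the same per-qubit summation, with overlap equally unaddressed, so your charging argument is needed there too. The charging works exactly when the per-qubit bound is read as counting CNOTs that toggle row $q$. In the paper's convention these are the CNOTs with control $q$, the only ones that can remove $q$ from a term. Those sets are disjoint across qubits, so the sum is exact and no factor $2$ is lost. The factor $2$ is the cost of the other reading, which counts every CNOT incident to $q$. A per-qubit count not localized at $q$ would not sum at all, so this localization is the real content of the step.
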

\begin{proof}
    Similar to the proof of Theorem~\ref{thm:unconstrained-lower}, we need to disconnect each qubit from all terms in the phase polynomial. 
    So, we apply the result from Corrolary~\ref{cor:disconnect-constrained-least} to all $n$ qubits, resulting in at least $\mathcal{O}(\frac{gn}{\max (\log g, 1)})$ CNOTs.
\end{proof}

Note that if the spanning tree $G'$ is a line, i.e. if the graph $G$ contains a Hamiltonian path, then it takes less CNOTs than when $G'$ is some binary tree. This is because the uncomputation when backtracking in the depth-first traversal doubles the CNOT count. So, there is an effect of the graph structure on the CNOT count, but it is a scalar factor. 

\subsection{The cost of qubit routing phase polynomials}\label{subsec:cost}
Now that we know the CNOT cost of synthesizing phase polynomials, we investigate the SWAP cost routing the qubits of a circuit that does not fit the target hardware. We do this by first defining the \textbf{routing overhead factor} so we have a theoretical measure of cost in terms of scaling in CNOT complexity.

\begin{definition}[Routing overhead factor]
    The \textbf{routing overhead factor} is the scalar \textbf{$\alpha$} such that $\alpha C_{OPT}=C_{Routed}$ where $C_{OPT}$ is the optimal CNOT count and $C_{Routed}$ is the routed CNOT count.
\end{definition}

Then, we can calculate the upper and lower bounds of the routing overhead factor when using SWAP-based routing.

\begin{theorem}\label{thm:swap-upper}
    An $n$ qubit phase polynomial with $g$ terms that has been synthesized without taking connectivity into account and then routed using a practical algorithm requires at most $\mathcal{O}(gn\cdot n\log^2n)$ CNOTs.
\end{theorem}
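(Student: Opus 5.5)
The plan is to multiply the unconstrained synthesis bound of Theorem~\ref{thm:unconstrained-upper} by a worst-case per-gate routing cost. Concretely, the statement follows from (i) an $\mathcal{O}(gn)$ bound on the number of CNOTs in the unconstrained circuit and (ii) a bound of $\mathcal{O}(n\log^2 n)$ on the CNOTs that a practical SWAP-based router spends to make a single two-qubit gate executable on the coupling graph $G$.

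First, I invoke Theorem~\ref{thm:unconstrained-upper}. It gives a circuit of at most $\mathcal{O}(gn)$ CNOTs, and the $R_Z$ gates cost no CNOTs. The $R_Z$ gates are single-qubit, so routing never has to act on them. Routing therefore only needs to account for these $\mathcal{O}(gn)$ CNOTs.

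Second, I bound the cost of routing one CNOT. A practical router, for instance one that brings the control next to the target along a shortest path in $G$, inserts SWAPs until the two logical qubits sit on adjacent physical qubits. Each SWAP costs 3 CNOTs. A path in a connected graph on $n$ vertices has length at most $n-1$, so one gate costs at most $\mathcal{O}(n)$ SWAPs when handled greedily. A router that re-establishes a good global mapping between gates instead uses a permutation-routing (token swapping / sorting network on $G$) step. I would cite the known bound that any permutation can be realised on a connected graph with $\mathcal{O}(n\log^2 n)$ SWAPs, and treat that as the worst-case per-layer or per-gate cost of a ``practical algorithm.'' Taking the larger of these two costs gives $\mathcal{O}(n\log^2 n)$ CNOTs per original CNOT, or per remapping event.

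Third, I multiply the two bounds. There are at most $\mathcal{O}(gn)$ routing events, and each costs $\mathcal{O}(n\log^2 n)$ CNOTs. Adding the original CNOTs gives $\mathcal{O}(gn)+\mathcal{O}(gn\cdot n\log^2 n)=\mathcal{O}(gn\cdot n\log^2 n)$.

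The main obstacle is step two: pinning down what ``practical algorithm'' means, so that the $n\log^2 n$ factor is justified rather than the cruder $\mathcal{O}(n)$ or $\mathcal{O}(n^2)$ of a full permutation. I would address this by fixing the router model explicitly. Before each gate, at most one permutation of the layout is performed, and it is implemented by a known token-swapping or sorting-network construction whose SWAP count is $\mathcal{O}(n\log^2 n)$ on the relevant architectures. Since this only needs to hold as an upper bound, any router at least this good inherits the result.
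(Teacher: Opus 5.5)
Your outline matches the paper's proof. Both take the $\mathcal{O}(gn)$ CNOTs from Theorem~\ref{thm:unconstrained-upper} and charge each one a sorting network's worth of SWAPs, with Batcher-type $\mathcal{O}(n\log^2 n)$ networks as the ``practical'' stand-in for AKS.

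However, the step you lean on is false as stated: not every permutation can be realised on every connected graph with $\mathcal{O}(n\log^2 n)$ SWAPs. On a path, a SWAP of neighbouring qubits changes the number of inversions of the layout by exactly one. Reversing the layout therefore needs $n(n-1)/2=\Theta(n^2)$ SWAPs. On a $\sqrt{n}\times\sqrt{n}$ grid, the sum-of-distances bound for token swapping gives $\Omega(n^{3/2})$ for the point reflection. The $\mathcal{O}(n\log^2 n)$ of Batcher or pairwise sorting counts comparators between \emph{arbitrary} wire pairs. It presupposes all-to-all connectivity and does not translate into SWAPs on a sparse $G$. (The paper's proof makes the same leap, so this weakness is shared rather than introduced by you.) Your hedge ``on the relevant architectures'' does not rescue it, since grid-like devices are exactly where the count fails.

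You do not need that step, because your greedy bound is sound and already suffices. Before each CNOT, move one operand along a shortest path in $G$ until it is adjacent to the other. This costs at most $\mathrm{dist}_G(u,v)-1\le n-2$ SWAPs, i.e.\ at most $3(n-2)$ CNOTs, on any connected $G$ and from any current layout. The routed circuit therefore has at most $\mathcal{O}(gn)\cdot(1+3(n-2))=\mathcal{O}(gn^2)$ CNOTs, which lies inside the claimed $\mathcal{O}(gn\cdot n\log^2 n)$. So drop the token-swapping citation and the ``take the larger cost'' step, and fix the ``practical algorithm'' to be this greedy router. The proof then goes through and gives a slightly stronger bound than stated.
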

\begin{proof}
    From Theorem~\ref{thm:unconstrained-upper}, we know that we need to route at most $\mathcal{O}(gn)$ CNOTs. In the worst case, none of the CNOTs can be parallelized and the respective qubits need to be moved far away. Thus, we need a sorting network for each CNOT. Although sorting networks can be constructed in $\mathcal{O}(n\log n)$, this algorithm has a large scalar factor~\cite{aksnetwork}, so in practice an $\mathcal{O}(n\log^2n)$ algorithm is used~\cite{batchersorting,pairwisesorting}.
\end{proof}
\begin{corollary}
    \label{cor:swap-overhead-upper}
    The routing overhead factor for routing a phase polynomial that was synthesized without taking into account the CNOT constraints is at most $\mathcal{O}(n\log^2n)$.
\end{corollary}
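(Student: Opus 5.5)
The corollary follows from dividing the routed CNOT count from Theorem~\ref{thm:swap-upper} by the optimal CNOT count from the bounds of Section~\ref{subsec:lower}. By the definition of the routing overhead factor, $\alpha = C_{Routed}/C_{OPT}$. To upper-bound $\alpha$ we need an upper bound on $C_{Routed}$ and a lower bound on $C_{OPT}$, and both have to refer to the same parameters $g$ and $n$.

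First, we bound the numerator. Theorem~\ref{thm:swap-upper} states that after unconstrained synthesis and SWAP-based routing, at most $\mathcal{O}(gn\cdot n\log^2 n)$ CNOTs are used. It is cleaner to phrase this per gate. Let $C$ be the number of CNOTs produced by the unconstrained synthesis. In the worst case, every one of these CNOTs is preceded by a sorting network of $\mathcal{O}(n\log^2 n)$ SWAPs, and each SWAP costs 3 CNOTs. Hence $C_{Routed} \leq C\cdot\mathcal{O}(n\log^2 n)$.

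Second, we compare $C$ with $C_{OPT}$. If the unconstrained synthesis is taken to be optimal, then $C = C_{OPT}^{\text{unconstr}}$. By Observation~\ref{obs:constrained>unconstrained}, the unconstrained optimum is at most the constrained optimum $C_{OPT}$. Dividing then gives $\alpha \leq \mathcal{O}(n\log^2 n)$ directly. If instead we only want to use the explicit bounds, we compare Theorem~\ref{thm:unconstrained-upper} ($\mathcal{O}(gn)$) with Theorem~\ref{thm:constrained-lower} ($\mathcal{O}(gn/\max(\log g,1))$). This comparison gives $\alpha \leq \mathcal{O}(n\log^2 n\cdot\max(\log g,1))$, which is weaker by a $\log g$ factor.

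The main obstacle is therefore this ratio step: the upper bound of Theorem~\ref{thm:unconstrained-upper} and the lower bound of Theorem~\ref{thm:constrained-lower} differ by a factor of $\log g$. I would avoid that gap by working with the per-CNOT formulation above, where the overhead is a multiplicative factor applied to whatever circuit is routed, and taking that input circuit to be an optimal (or near-optimal) unconstrained synthesis. The worst-case overhead per routed CNOT is then the sorting-network cost $\mathcal{O}(n\log^2 n)$, independent of $g$. This yields the stated bound on $\alpha$.
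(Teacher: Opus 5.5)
Your proposal is correct and is essentially the paper's argument: the corollary is read off from Theorem~\ref{thm:swap-upper}, whose content is exactly your per-CNOT bound (at most one sorting network of SWAPs per CNOT). Normalizing by the input count $C$, together with the trivial inequality $C_{OPT}^{\mathrm{unconstr}}\le C_{OPT}$, makes explicit the division the paper leaves implicit and correctly avoids the spurious $\max(\log g,1)$ factor that comparing the explicit bounds would introduce. As a side remark, the per-CNOT cost is more cleanly justified by moving one endpoint along a shortest path in the connected coupling graph, which takes at most $n-2$ SWAPs and hence $\mathcal{O}(n)$ CNOTs per gate; by contrast, an $\mathcal{O}(n\log^2 n)$-comparator sorting network does not in general become that many nearest-neighbour SWAPs on an arbitrary graph.
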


However, we might get lucky and the phase polynomial was optimized with the minimal number of CNOTs and all CNOTs can be parallelized as much as possible.
\begin{lemma}\label{lem:min-depth}
    An $n$ qubit circuit containing $\mathcal{O}(C)$ CNOTs has a depth of at least $\mathcal{O}(\frac{C}{n})$.
\end{lemma}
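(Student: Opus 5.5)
The argument is a counting one: bound how many CNOTs can fit into a single layer of the circuit, then divide. A layer (time step) is a set of gates acting on pairwise disjoint qubits. Every CNOT acts on exactly two qubits, so a single layer of an $n$ qubit circuit contains at most $\lfloor n/2\rfloor$ CNOTs. If the circuit has depth $d$, its CNOT gates are spread over at most $d$ layers, so the total CNOT count satisfies $C \leq d\lfloor n/2 \rfloor$. Rearranging gives $d \geq \frac{2C}{n}$, which is the claimed bound of $\mathcal{O}(\frac{C}{n})$ on the depth, read in the paper's lower-bound sense of $\mathcal{O}$.

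In order, the steps are:
\begin{enumerate}
\item Fix any circuit with $C$ CNOTs and partition it into layers of mutually non-overlapping gates. The number of layers is the depth $d$; single qubit gates may only add layers, never remove them.
\item Observe that the CNOTs within one layer form a matching on the $n$ qubits, so each layer holds at most $n/2$ of them.
\item Sum over layers to get $C \leq d\cdot n/2$.
\item Conclude $d \geq 2C/n$.
\end{enumerate}
Because the statement is phrased with $\mathcal{O}(C)$ CNOTs, I would take a representative count $C$ and absorb the constants into the asymptotic notation.

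The main obstacle is the interpretation rather than the mathematics. I need to make precise that depth is measured in layers of parallel gates, and that the bound is meant as a lower bound ($\Omega$), consistent with the paper's earlier convention of writing lower bounds with $\mathcal{O}$. I would note explicitly that the bound is tight when all CNOTs can be packed into perfect matchings. This is the "lucky" case the subsequent routing argument relies on, since each layer then requires only one sorting network instead of one per CNOT.
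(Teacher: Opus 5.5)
Your proof is correct and is the same argument as the paper's. The paper's proof is the single observation that at most $n/2$ CNOTs can be executed in parallel; you make the resulting layer count $C \leq d\cdot n/2$, hence $d \geq 2C/n$, explicit.
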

\begin{proof}
    You can do at most $\frac{n}{2}$ CNOTs in parallel.
\end{proof}

\begin{lemma}\label{lem:swap_cost}
    A circuit with a CNOT depth of $l$, requires at least $\mathcal{\Omega}(ln\log n )$ additional SWAPs (and hence CNOTs) to move qubits using traditional SWAP-based routing strategies.
\end{lemma}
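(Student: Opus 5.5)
The plan is to argue layer by layer. We treat the circuit as $l$ consecutive CNOT layers and bound from below the SWAP cost that a traditional SWAP-based router pays on each layer, then sum over the layers.

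First we fix the model. A traditional router works as follows: before each layer it applies a network of SWAPs along edges of the coupling graph $G$. The goal of that network is to permute the logical qubits so that every CNOT in the layer acts on adjacent physical qubits. Since Lemma~\ref{lem:swap_cost} concerns the worst case, we would take the layers to be maximally parallel, with about $\frac{n}{2}$ CNOTs each (cf.\ Lemma~\ref{lem:min-depth}). We would also choose the interaction pairs of consecutive layers so that they are "generic" with respect to each other, for example pairings induced by random or adversarial permutations. The point of this choice is that whatever placement serves layer $i$ is, in general, far from any placement that serves layer $i+1$.

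The key step is a per-layer bound: making a generic perfect matching of $n$ qubits adjacent on $G$ requires $\Omega(n\log n)$ SWAPs. We would prove it by counting. A SWAP network of depth $d$ with $s$ SWAPs can realise at most $2^{s}$ distinct permutations, since each SWAP is either applied or not at a fixed set of positions. Equivalently, a general-purpose router must act as a rearrangeable permutation network. To reach a constant fraction of the $n!$ permutations, we then need $s \ge \log_2(\Theta(n!)) = \Omega(n\log n)$. This matches the sorting-network lower bound, which is the natural counterpart of the upper bound used in Theorem~\ref{thm:swap-upper}. Only a constant fraction of the $n!$ relabellings produce the same set of adjacent pairs, so the count survives the reduction from permutations to matchings up to lower-order terms.

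Summing over the $l$ layers gives $\Omega(l\,n\log n)$ SWAPs. Each SWAP decomposes into $3$ CNOTs, so the bound transfers directly to the CNOT count.

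The main obstacle is making the per-layer bound honest. A SWAP network only needs to reach one of the many qubit placements that realise the next layer, not a specific permutation. Also, a clever router could amortise work across layers, and on dense graphs such as the complete graph no SWAPs are needed at all. Our approach is therefore to state explicitly that $G$ is sparse (bounded degree), so that on expanders or grids the count over reachable placements versus required placements goes through. We would also restrict "traditional SWAP-based routing strategies" to routers that route layer by layer using a generic sorting or permutation network; in that model the $\Omega(n\log n)$ per-layer cost is immediate from the sorting lower bound.
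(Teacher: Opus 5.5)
Your skeleton is the paper's: before each of the $l$ layers the router must rearrange the qubits, that rearrangement is treated as a sorting network costing $\Omega(n\log n)$ SWAPs, and you sum over layers. The paper's proof is just that assertion plus a citation to the AKS network. Your closing restriction to routers that apply a generic sorting or permutation network per layer is the assumption the paper makes silently. Your remarks that the bound fails on dense coupling graphs, and can be beaten by routers that amortise across layers, are real limitations of the statement that the paper does not mention.

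The step that does not work is the counting argument you present as the key per-layer bound.

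\begin{itemize}
\item \emph{What the count measures.} The $2^{s}$ count is over switch settings of a fixed, oblivious network with $s$ positions. It lower-bounds the \emph{size} of such a network, not the number of SWAPs actually executed. A position that is not applied costs no CNOTs, and a router knows the target layer when it chooses its SWAPs.
\item \emph{Counting adaptive sequences instead.} On a coupling graph with $|E|$ edges, $k$ SWAPs produce at most $(|E|+1)^{k}$ placements. When $G$ has bounded degree, each placement realises at most $2^{O(n)}$ matchings. So $(|E|+1)^{k}2^{O(n)}\ge (n-1)!!$ gives only $k=\Omega(n\log n/\log|E|)=\Omega(n)$. The count therefore does not ``go through'' on expanders or grids.
\item \emph{The constant-fraction claim is wrong.} The fraction of placements realising a fixed matching is exponentially small, not a constant. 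What keeps the count at $\Theta(n\log n)$ bits is that $\log_2 (n-1)!!\approx\frac12 n\log_2 n$. You lose a factor $2$, not a lower-order term.
\end{itemize}

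To get $\Omega(n\log n)$ executed SWAPs per layer on a bounded-degree graph, use a distance argument instead. With maximum degree $\Delta$, only $\Delta^{O(r)}$ vertices lie within distance $r$ of any vertex. Hence a matching chosen generically with respect to the current placement has total pair distance $\Omega(n\log n)$. Each SWAP changes that total by at most $2$, and the target total is $n/2$, which gives the bound. Even this handles only a single layer from a given placement. Routers that choose placements anticipating later layers are excluded only by your model restriction, just as in the paper.
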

\begin{proof}
    Before each layer of CNOTs, we need SWAPs to change the locations of the qubits such that the next layer of SWAPs can be applied. This problem defines a sorting network which theoretically can be implemented with $\mathcal{\Omega}(n\log n)$ SWAPs~\cite{aksnetwork}.
\end{proof}
\begin{theorem}
\label{thm:swap-lower}
    An $n$ qubit phase polynomial with $g$ terms that has been synthesized without taking connectivity into account and then routed requires at least $\mathcal{O}(\frac{gn}{\max(\log g, 1)}\cdot \log n)$ CNOTs.    
\end{theorem}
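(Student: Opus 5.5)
The plan is to chain the three preceding results: the CNOT lower bound for unconstrained synthesis, the minimum-depth lemma, and the per-layer SWAP cost lemma. Since the phase polynomial was synthesized without regard to connectivity, the circuit handed to the router contains at least as many CNOTs as the best unconstrained synthesis. By Theorem~\ref{thm:unconstrained-lower}, we may therefore take an instance whose synthesized circuit has $C = \mathcal{O}(\frac{gn}{\max(\log g,1)})$ CNOTs. This is the favourable case: the synthesizer achieved the optimum, and the router only has to pay for moving qubits.

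Next, I would bound the depth from below. By Lemma~\ref{lem:min-depth}, a circuit with $C$ CNOTs on $n$ qubits has CNOT depth at least $l = \mathcal{O}(\frac{C}{n}) = \mathcal{O}(\frac{g}{\max(\log g,1)})$. I would keep the parallelization assumption explicit, namely that at most $\frac{n}{2}$ CNOTs fit in one layer. This yields the smallest number of layers the router could face, which is what a lower bound requires.

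Then I would apply Lemma~\ref{lem:swap_cost}. Each layer must be preceded by a permutation of qubit positions realized by a SWAP network. That network costs $\Omega(n\log n)$ SWAPs, and each SWAP is three CNOTs, so this is a constant factor. Over $l$ layers the additional CNOTs total $\Omega(l\, n\log n) = \mathcal{O}(\frac{gn}{\max(\log g,1)}\log n)$. Adding the original $C$ CNOTs does not change the order, so the routed circuit has at least $\mathcal{O}(\frac{gn}{\max(\log g,1)}\cdot\log n)$ CNOTs, as claimed. Dividing by the optimal count from Theorem~\ref{thm:constrained-lower} would then give the companion corollary $\alpha \geq \mathcal{O}(\log n)$.

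The main obstacle is justifying that \emph{every} layer genuinely requires a full $\Omega(n\log n)$ sorting network, rather than only the worst-case layer. A crude router might need only a few local SWAPs when consecutive layers are compatible with the coupling graph. I would handle this the same way the paper treats the lower-bound instance: choose the hard phase polynomial, and the circuit realizing it, adversarially. Because the synthesis ignored connectivity, consecutive layers can pair qubits arbitrarily. Then, for a generic sparse coupling graph such as a line or grid, the required permutation between layers is a generic one, and the standard sorting-network lower bound of Lemma~\ref{lem:swap_cost} applies per layer. If this is too strong, I would fall back to arguing the bound for a positive fraction of the layers, which still gives the same asymptotic order.
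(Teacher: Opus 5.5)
Your first three paragraphs are exactly the paper's proof. Theorem~\ref{thm:unconstrained-lower} supplies the CNOT count $C$, Lemma~\ref{lem:min-depth} turns it into at least $2C/n$ layers, and Lemma~\ref{lem:swap_cost} charges $\Omega(n\log n)$ SWAPs to each layer. The paper stops there and applies Lemma~\ref{lem:swap_cost} to every layer at face value, so your last paragraph has no counterpart in it.

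The obstacle you raise there is genuine: Lemma~\ref{lem:swap_cost} as stated already fails for any circuit that happens to respect $G$. But your patch does not remove it, for three reasons.

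(i) The $\Omega(n\log n)$ sorting-network bound counts the comparators needed to sort \emph{all} $n!$ inputs. It does not lower-bound the SWAPs needed to realize one particular permutation (on a complete coupling graph that is at most $n-1$, and no routing is needed there anyway). Still less does it bound the weaker task of making one layer's CNOT pairs adjacent. A per-layer cost has to come from the geometry of $G$. One option is a distance-sum potential on a bounded-degree $G$: a SWAP moves two qubits by one edge each, so it lowers $\sum(\mathrm{dist}-1)$ over a layer's pairs by at most $2$, while most pairs are $\Omega(\log n)$ apart. That charges $\Omega(\log n)$ per far-apart CNOT rather than a flat $\Omega(n\log n)$ per layer, and would give $\Omega(C\log n)$ without Lemma~\ref{lem:min-depth}.

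(ii) Even so, such a bound holds only relative to the placement the router has at that moment, and the router chooses placements knowing the upcoming layers. On a line, take any two perfect matchings $M,M'$ and fold each alternating cycle of $M\cup M'$ onto a segment. This gives a placement in which $M$ is adjacent and every pair of $M'$ is within distance $5$, so the next layer costs only $\mathcal{O}(n)$ SWAPs however generic it is. A flat per-layer charge is therefore false against a look-ahead router. Your fallback to a positive fraction of layers would need an amortized argument over adaptive placements, which you do not give.

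(iii) You cannot choose the layers freely while also insisting, as in your first two paragraphs, that the circuit be a CNOT-optimal, maximally parallel implementation of the hard instance. Theorem~\ref{thm:unconstrained-lower} only provides a count, and the pairs an optimal circuit can couple are constrained by the phase polynomial. Indeed, the paper's own Theorem~\ref{thm:routing4freepp} asserts that a $G$-compliant circuit within a constant factor of optimal exists. A connectivity-oblivious synthesizer may output it, in which case routing costs nothing. So the statement can only be existential. The missing step, in your proof and in the paper's, is to exhibit an optimal-order circuit whose layers stay far from $G$ under every cheap sequence of placements.

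Separately, for the companion corollary you would need an upper bound on $C_{OPT}$; Theorem~\ref{thm:constrained-lower} points the wrong way.
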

\begin{proof}
    From Theorem~\ref{thm:unconstrained-lower}, we know that the circuit will have at least $\mathcal{O}(\frac{gn}{\max(\log g, 1)})$ CNOTs. From Lemma~\ref{lem:min-depth}, we know that there are at least $\mathcal{O}(\frac{g}{\max(\log g, 1)})$ CNOT layers. From Lemma~\ref{lem:swap_cost}, we know that such a circuit can be routed using at least $\mathcal{O}(\frac{g\cdot n\log n}{\max(\log g, 1)})$ CNOTs.
\end{proof}
\begin{corollary}
    \label{cor:swap-overhead-lower}
    The routing overhead factor for routing a phase polynomial that was synthesized without taking into account the CNOT constraints is at least $\mathcal{O}(\log n)$.
\end{corollary}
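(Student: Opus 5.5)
The corollary should follow by dividing the routed CNOT count by the optimal one, in the same way Corollary~\ref{cor:swap-overhead-upper} follows from Theorem~\ref{thm:swap-upper}. By the definition of the routing overhead factor, $\alpha = C_{Routed}/C_{OPT}$, so a lower bound on $\alpha$ requires a lower bound on $C_{Routed}$ together with an upper bound on $C_{OPT}$. The lower bound is Theorem~\ref{thm:swap-lower}: routing requires at least $\mathcal{O}(\frac{gn}{\max(\log g,1)}\cdot\log n)$ CNOTs.

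For the denominator I would not use the generic upper bound $\mathcal{O}(gn)$ of Theorem~\ref{thm:unconstrained-upper}. Dividing by it would give only $\frac{\log n}{\max(\log g,1)}$, which is not the claimed bound. Instead I would use the scenario set up just before Lemma~\ref{lem:min-depth}, where the synthesized phase polynomial is optimal. In that case $C_{OPT}$ equals the CNOT count of the input circuit, namely $\mathcal{O}(\frac{gn}{\max(\log g,1)})$, matching Theorem~\ref{thm:unconstrained-lower}. Its CNOT depth is then at least $\mathcal{O}(\frac{g}{\max(\log g,1)})$ by Lemma~\ref{lem:min-depth}.

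The key step is that the routing cost in Theorem~\ref{thm:swap-lower} is computed from exactly this depth. Lemma~\ref{lem:swap_cost} adds $\Omega(n\log n)$ SWAPs per layer, so
\[
C_{Routed} \geq C_{OPT} + \Omega\!\left(\frac{g}{\max(\log g,1)}\cdot n\log n\right).
\]
Dividing by $C_{OPT}=\Theta(\frac{gn}{\max(\log g,1)})$, the factor $\frac{gn}{\max(\log g,1)}$ cancels and leaves $\alpha \geq 1+\Omega(\log n)$, that is, $\alpha$ is at least $\mathcal{O}(\log n)$ in the paper's convention.

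The main obstacle is making sure numerator and denominator refer to the same circuit instance, so that the cancellation is legitimate. Lemma~\ref{lem:min-depth} bounds depth in terms of the CNOT count $C$ of that very circuit, and Lemma~\ref{lem:swap_cost} charges $n\log n$ per layer regardless of $C$. So for any instance, the routed count is at least $C_{OPT}+\frac{C_{OPT}}{n}\cdot n\log n = C_{OPT}(1+\log n)$, where I read Lemma~\ref{lem:min-depth} as depth $\geq 2C/n$ because at most $n/2$ CNOTs fit in one layer; any constant factor there is absorbed anyway. I would state the argument in this instance-independent form, $\alpha \geq 1+\Omega(\log n)$. That avoids having to match the asymptotic forms of $g$ and $\max(\log g,1)$ between the two bounds.
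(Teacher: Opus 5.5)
Your proposal is correct and matches the paper's implicit argument. The corollary is obtained by dividing the routed count of Theorem~\ref{thm:swap-lower} (itself built from Lemmas~\ref{lem:min-depth} and~\ref{lem:swap_cost}) by the optimal count $\frac{gn}{\max(\log g,1)}$ of Theorem~\ref{thm:unconstrained-lower}, exactly as Corollary~\ref{cor:swap-overhead-upper} follows from Theorem~\ref{thm:swap-upper}; your instance-level phrasing, $C_{Routed}\geq C(1+\Omega(\log n))$ with $C\geq C_{OPT}$ the CNOT count of the circuit actually being routed, is a welcome tightening, since it needs only $C_{OPT}\leq C$ rather than the paper's tacit division of one lower bound by another.
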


However, if the synthesized phase polynomial was synthesized while adhering to the restrictions of the CNOT placement, routing is not needed.
\begin{corollary}
    \label{cor:swap-overhead-routed}
    The routing overhead factor for routing a phase polynomial that was synthesized taking into account the CNOT constraints is exactly $\Theta(1)$.
\end{corollary}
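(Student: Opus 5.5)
The plan is to combine the constrained upper and lower bounds into a sandwich. Since the routing overhead factor is defined as $\alpha = C_{Routed}/C_{OPT}$, I need a constant upper bound on the ratio between the CNOT count of an architecture-aware synthesis and the optimal CNOT count. I will take $C_{OPT}$ to be the optimal unconstrained count. Then I will show that architecture-aware synthesis attains the same asymptotic count, and that it introduces no SWAPs at all.

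The first step observes that when every CNOT is synthesized on an edge of the coupling graph $G$, no routing pass is applied. Hence $C_{Routed}$ is just the CNOT count of the constrained synthesis. By Observation~\ref{obs:constrained>unconstrained}, this count is at least the unconstrained optimum, so $\alpha \geq 1$.

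The second step bounds $\alpha$ from above. For each qubit I use the disconnection procedure of Lemma~\ref{lem:disconnect-constrained-most}, applied along a depth-first traversal of a spanning tree $G'$ of $G$. Every CNOT placed in the unconstrained divide-and-conquer of Lemma~\ref{lem:disconnect-unconstrained} is then replaced by at most $4$ CNOTs: two for descending an edge (CNOT plus the SWAP-merging trick) and two for uncomputing on the way back up. Summing over all $n$ qubits as in Theorem~\ref{thm:constrained-lower} gives a constrained synthesis with at most $4$ times the unconstrained per-qubit count. Comparing with Theorem~\ref{thm:unconstrained-lower} and Corollary~\ref{cor:disconnect-constrained-least} then gives $1 \leq \alpha \leq 4$, which is $\Theta(1)$. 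The same conclusion also follows, more crudely, by matching Theorems~\ref{thm:constrained-upper} and \ref{thm:constrained-lower} against Theorems~\ref{thm:unconstrained-upper} and \ref{thm:unconstrained-lower}: both sides share the same order.

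The main obstacle is making the factor-$4$ comparison honest. The per-qubit procedures must compose without extra cost, in particular the cleanup of the residual linear reversible map that Theorem~\ref{thm:constrained-upper} handles via uncomputation or \cite{SBE}. I would first try to charge that cleanup to the already-placed CNOTs. Direct uncomputation at most doubles the count, which still yields a constant. If that fails, I would fall back on the asymptotic sandwich between Theorems~\ref{thm:constrained-upper}--\ref{thm:constrained-lower} and their unconstrained counterparts to conclude $\alpha = \Theta(1)$.
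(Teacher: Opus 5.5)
Your first step is essentially all the paper relies on. The corollary is read off from the remark that a circuit whose CNOTs already lie on edges of $G$ needs no routing. The router therefore inserts no SWAPs, $C_{Routed}$ equals the CNOT count of the circuit handed to it, and $\alpha=1$. The comparison between constrained and unconstrained synthesis is not part of this corollary; the paper makes it separately in Theorem~\ref{thm:routing4freepp}. The gap comes from your choice of $C_{OPT}$ as the \emph{unconstrained} optimum for the given instance. With that baseline the claim $1\leq\alpha\leq 4$, and even $\alpha=\mathcal{O}(1)$, is false. Take $G$ to be the path $1-2-\cdots-n$ and the single term $e^{-i\frac{\theta}{2}Z_1Z_n}$ with $\theta$ not a multiple of $\pi$. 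Without constraints, two CNOTs suffice. Under the constraints, every edge $\{k,k+1\}$ must carry at least one CNOT. Otherwise the circuit factorises across the cut $\{1,\dots,k\}\,|\,\{k+1,\dots,n\}$, whereas $\cos\frac{\theta}{2}\,I-i\sin\frac{\theta}{2}\,Z_1Z_n$ is entangling across it. Hence $C_{Routed}\geq n-1$ and $\alpha\geq (n-1)/2$.

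The step that fails is ``every CNOT of the unconstrained divide-and-conquer is replaced by at most 4 CNOTs''. Lemma~\ref{lem:disconnect-constrained-most} charges its 4 CNOTs per edge step of the depth-first traversal of $G'$, including trivial splits with $g_{11}=\emptyset$. Its count therefore scales with the length of the walk through $G'$, not with the number of unconstrained CNOTs. In the example, one unconstrained CNOT between qubits $1$ and $n$ forces a walk over $n-1$ edges, and the cut argument shows no constrained procedure can avoid this. Your fallback does not help either. Theorems~\ref{thm:unconstrained-lower} and \ref{thm:constrained-lower} only assert that \emph{some} phase polynomial is hard, so they give no lower bound on $C_{OPT}$ for the instance actually being routed. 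Even as worst-case figures, $\mathcal{O}(gn)$ and $gn/\max(\log g,1)$ differ by a factor $\log g$, so matching orders of magnitude does not bound the ratio by a constant. To prove the statement as the paper intends, measure $\alpha$ against the CNOT count of the architecture-compliant circuit itself (the constrained optimum, if that circuit is optimal). Your first step then already gives $C_{Routed}=C_{OPT}$, hence $\alpha=1=\Theta(1)$, and nothing further is needed.
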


\begin{theorem}[Qubit routing for free in phase polynomials]
    \label{thm:routing4freepp}
    Any $n$ qubit circuit representing a phase polynomial synthesized without taking connectivity into account has a routing overhead factor of $\mathcal{O}(\log n) \leq \alpha \leq \mathcal{O}(n\log^2n)$ when using conventional SWAP-based routing, and a routing overhead of $\alpha = \mathcal{O}(1)$ when resynthesizing it using constrained synthesis
\end{theorem}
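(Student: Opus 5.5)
The theorem is an assembly of results already established in Section~\ref{subsec:cost} and Sections~\ref{subsec:upper}--\ref{subsec:lower}. I would split it into two parts: the SWAP-based routing bound and the resynthesis bound. In both parts the routing overhead factor is treated as the ratio $\alpha = C_{Routed}/C_{OPT}$, with $C_{OPT}$ the unconstrained optimum.

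\emph{SWAP-based routing.} For the upper end I would invoke Corollary~\ref{cor:swap-overhead-upper}. Theorem~\ref{thm:swap-upper} gives $C_{Routed}\leq\mathcal{O}(gn\cdot n\log^2 n)$. Theorems~\ref{thm:unconstrained-upper} and~\ref{thm:unconstrained-lower} place the unrouted count between $\mathcal{O}(\frac{gn}{\max(\log g,1)})$ and $\mathcal{O}(gn)$. The ratio should be taken against a synthesized circuit of size $\Theta(gn)$, so that each CNOT is charged one sorting network; this gives $\alpha\leq\mathcal{O}(n\log^2 n)$. For the lower end I would invoke Corollary~\ref{cor:swap-overhead-lower}. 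That corollary follows from Theorem~\ref{thm:swap-lower}, which chains Lemma~\ref{lem:min-depth} and Lemma~\ref{lem:swap_cost}: any circuit with $C$ CNOTs has depth at least $C/n$, and each layer costs $\Omega(n\log n)$ SWAPs. Dividing $\mathcal{O}(\frac{gn\log n}{\max(\log g,1)})$ by $C_{OPT}=\mathcal{O}(\frac{gn}{\max(\log g,1)})$ yields $\alpha\geq\mathcal{O}(\log n)$.

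\emph{Resynthesis.} Here I would drop the given circuit and keep only its phase polynomial, which is a list of $g$ parity terms with angles. I would then synthesize it directly with the constrained procedure. By Theorem~\ref{thm:constrained-upper}, this uses at most $\mathcal{O}(gn)$ CNOTs, each native to $G$, so no SWAPs are ever inserted. By Theorem~\ref{thm:constrained-lower} together with Corollary~\ref{cor:disconnect-constrained-least}, the constrained optimum has the same order as the unconstrained optimum in Theorem~\ref{thm:unconstrained-lower}. The per-qubit disconnection cost is tight up to the factor of at most $4$ from the depth-first traversal in Lemma~\ref{lem:disconnect-constrained-most}. Hence the ratio of constrained-synthesized CNOTs to $C_{OPT}$ is bounded by a constant, which gives $\alpha=\mathcal{O}(1)$ and matches Corollary~\ref{cor:swap-overhead-routed}.

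The main obstacle is keeping the denominators consistent. The lower bounds are stated as $\mathcal{O}(\frac{gn}{\max(\log g,1)})$, while the constructive upper bounds are $\mathcal{O}(gn)$. A careless comparison of Theorem~\ref{thm:constrained-upper} with Theorem~\ref{thm:unconstrained-lower} would leave a $\log g$ gap rather than a constant. To close it, I would compare the constrained and unconstrained costs with the \emph{same} construction on both sides. This means using the divide-and-conquer disconnection of Lemma~\ref{lem:disconnect-unconstrained} and its tree-traversal version in Lemma~\ref{lem:disconnect-constrained-most}, qubit by qubit. Lemma~\ref{lem:disconnect-constrained-most} shows each splitting step costs at most $4$ CNOTs instead of $1$. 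Summing over the $n$ qubits then gives a constrained count within a factor of $4$ of the unconstrained one, which is the constant $\alpha$ claimed.
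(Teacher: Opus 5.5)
Up to your final paragraph, this is the paper's proof. The paper cites Corollaries~\ref{cor:swap-overhead-lower} and~\ref{cor:swap-overhead-upper} for the SWAP-based range. It then asserts that the constrained bounds (Theorems~\ref{thm:constrained-lower} and~\ref{thm:constrained-upper}) match the unconstrained ones (Theorems~\ref{thm:unconstrained-lower} and~\ref{thm:unconstrained-upper}) up to a scalar $\leq 4$, and uses Corollary~\ref{cor:swap-overhead-routed} for the absence of SWAPs. It never confronts the $\log g$ discrepancy you identify; it compares lower bound with lower bound and upper bound with upper bound, and stops there.

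Your repair of that discrepancy is where the proposal breaks. The paper sums the per-qubit disconnection results of Lemmas~\ref{lem:disconnect-unconstrained} and~\ref{lem:disconnect-constrained-most} over qubits only inside the \emph{lower}-bound Theorems~\ref{thm:unconstrained-lower} and~\ref{thm:constrained-lower}. Nothing shows that running those routines one qubit after another yields a circuit implementing the phase polynomial. The routines interact through the shared parity matrix. The $R_Z$ gates must be placed as columns reach weight one. The leftover linear reversible map must also be undone, which Theorem~\ref{thm:constrained-upper} pays for separately. If the routines did compose, you would have an $\mathcal{O}(\frac{gn}{\max(\log g,1)})$ synthesis upper bound matching Theorem~\ref{thm:unconstrained-lower}, which is exactly the tightness question Section~\ref{sec:future} leaves open. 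Even granting that, a ratio between two constructions is not $\alpha$, whose denominator you fixed as the unconstrained optimum.

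The factor $4$ is also not a per-instance factor. It is counted per splitting step, but the depth-first walk of Lemma~\ref{lem:disconnect-constrained-most} pays two CNOTs for every tree edge it traverses, including edges through qubits that split nothing. For a concrete case, take a single term $Z_1Z_n$ with a generic angle on a path $1,2,\dots,n$. Unconstrained synthesis uses $2$ CNOTs. Any circuit of nearest-neighbour two-qubit gates needs at least $n-1$ of them, since the output of qubit $n$ depends on the input of qubit $1$. So the constrained-to-unconstrained ratio is $\Omega(n)$ on this instance, and no argument along your lines can produce a per-instance constant. The claim $\alpha=\mathcal{O}(1)$ is supported only in the sense the paper's proof uses: the constrained and unconstrained lower and upper bounds agree up to constants. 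Drop the last paragraph, state the claim at that level, and record the $\log g$ gap as open rather than closed.

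The same caution applies to your SWAP part, which divides by $\Theta(gn)$ at the upper end but by $\frac{gn}{\max(\log g,1)}$ at the lower end. Charge one sorting network per CNOT and one per CNOT layer of the \emph{given} circuit. Both ends then become ratios to the same input count, independent of $g$.
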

\begin{proof}
    The routing overhead ratio for routing an incompatible phase polynomial is $\mathcal{O}(\log n) \leq \alpha \leq \mathcal{O}(n\log^2n)$ (Corollary~\ref{cor:swap-overhead-lower} and \ref{cor:swap-overhead-upper}). Alternatively, the lower and upper bound for phase polynomial synthesis in the constrained case is the same as in the unconstrained case with only a scalar $\leq  4$ difference (Theorem ~\ref{thm:unconstrained-lower}, \ref{thm:unconstrained-upper}, \ref{thm:constrained-lower}, \ref{thm:constrained-upper}). Moreover, if the phase polynomial is synthesized taking the connectivity constraints into account, it does not require additional SWAP gates for routing (Corollary~\ref{cor:swap-overhead-routed}).
\end{proof}
Thus, solving the qubit routing problem for phase polynomials by routing (i.e. swapping) qubits is inefficient. One can do better by synthesizing the phase polynomial and generating gates that are allowed. Moreover, in that case, the number of CNOTs is only a small constant factor ($\leq 4$) times the original CNOT count in the case the original circuit was well optimized.

\section{Extending phase polynomial bounds to general circuits}\label{sec:universal}
Ofcourse, phase polynomials are not universal for quantum computation. In fact, they are classically simulable~\cite{graysynth}. Thus, we need to extend the class of circuits with some single qubit gates to make it universal.

\begin{comment}
    
\begin{proposition}\label{prop:cnot-dihedral}
    The number of CNOTs needed for a CNOT dihedral circuit is the same as the number of CNOTs needed for the derived phase polynomial that is created from removing the the NOT gates for the CNOT dihedral circuit.
\end{proposition}
\begin{proof}
    We can write the CNOT dihedral circuit as a sequence of phase polynomials intertwined with NOT gates. Then, for each NOT gate, we can push them through the later phase polynomials to the end of the circuit. As a NOT gate is pushed through the phase polynomial, it flips the sign of the phase of each term that is connected to the qubit that the NOT gate acts on. We do this for all NOTs in the circuit and then merge the phase polynomials into a single phase polynomial~\cite{sarah}. Hence, a CNOT dihedral is the same as a phase polynomial with some NOT gates at the end of the circuit. Thus, they have the same CNOT count.
\end{proof}

CNOT dihedrals are not universal either. 
\end{comment}
To make a universal circuit out of phase polynomials, we need the Hadamard gate. Thus, we can write any quantum circuit as a sequence of phase polynomials intertwined with a layer of Hadamard gates on at least one qubit. 

\begin{lemma}\label{lem:universal-upper}
    A CNOT$+H+R_Z$ circuit containing $h$ Hadamard gates and $g$ $R_Z$ gates requires at most $\mathcal{O}(gn)$ CNOTs in both the constrained and unconstrained case.
\end{lemma}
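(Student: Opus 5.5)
The plan is to reduce the lemma to Theorems~\ref{thm:unconstrained-upper} and \ref{thm:constrained-upper}. Cut the circuit at every Hadamard gate, so that it becomes a sequence of phase polynomials $P_0, H_{(1)}, P_1, H_{(2)}, \dots, H_{(k)}, P_k$ with $k \leq h$. Here each $H_{(j)}$ is a layer of Hadamards, and each $P_j$ is a CNOT$+R_Z$ block. Say $P_j$ contains $g_j$ of the $R_Z$ gates. Since every $R_Z$ gate lies in exactly one block, we get $\sum_j g_j = g$. Each block $P_j$ may also carry a residual linear reversible (parity) part, since CNOTs need not be uncomputed inside a block.

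First, for each block I would separate its phase polynomial part, which has at most $g_j$ terms, from its trailing linear reversible part. Rather than synthesizing every block's linear part in place, I would push it forward through the Hadamard layer. This is the key observation: a CNOT layer does not commute with $H$. So I would instead follow the proof of Theorem~\ref{thm:unconstrained-upper}. That proof's first construction (naive CNOT ladders per term) synthesizes each term with $2(h(p_i)-1) \leq \mathcal{O}(n)$ CNOTs and leaves no residual linear map, because each ladder is uncomputed. Applied to every term of every block, this gives $\sum_j \mathcal{O}(g_j n) = \mathcal{O}(gn)$ CNOTs in the unconstrained case. In the constrained case I would use the Steiner-tree ladders from Theorem~\ref{thm:constrained-upper}, which cost $\leq 4(h(p_i)-1)$ CNOTs per term, again giving $\mathcal{O}(gn)$. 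Hadamards contribute no CNOTs.

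The main obstacle is that the block decomposition is not unique. The input circuit's CNOTs inside a block may realize a nontrivial linear reversible map $A_j$ as well as the phase terms. That map also has to be implemented before the next Hadamard layer, and its synthesis costs $\mathcal{O}(n^2/\log n)$ via~\cite{patelmarkovhayes} or $\mathcal{O}(n^2)$ constrained via~\cite{SBE}. That cost is not obviously bounded by $\mathcal{O}(g_j n)$ when $g_j$ is small. I would handle this by conjugating. A linear reversible map $A_j$ sitting before a Hadamard layer can be absorbed by re-expressing the circuit in the sum-over-paths form, where the phase polynomial terms after $H$ are written over the new path variables. Concretely, I would represent the whole circuit as a path-sum $\sum_{x} e^{i\phi(x)}$ with terms introduced only by $R_Z$ gates, so that only the $g$ phase terms need synthesizing and linear parts become relabelings of variables. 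If the relabeling argument fails, I would fall back on assuming the circuit is given with trivial linear parts between Hadamard layers, or charge each nontrivial $A_j$ to the $R_Z$ gates of its block. The latter is legitimate when every block contains at least $n/\log n$ terms. Blocks without $R_Z$ gates merge with their neighbors, since consecutive Hadamard-only layers contribute no CNOTs.
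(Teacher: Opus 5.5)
Your main line is exactly the paper's proof: cut at the Hadamards, implement every term of every block by an uncomputed CNOT ladder (a Steiner-tree ladder in the constrained case), and sum $\sum_j\mathcal{O}(g_jn)=\mathcal{O}(gn)$. The paper applies Theorems~\ref{thm:unconstrained-upper} and~\ref{thm:constrained-upper} to each inter-Hadamard block as a phase polynomial with $g_i$ terms and never mentions the residual linear reversible map $A_j$.

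Your worry about $A_j$ is well founded, but the repair you sketch cannot succeed. Read literally, with $g$ counting the $R_Z$ gates of an arbitrary CNOT$+H+R_Z$ circuit, the statement is false:
\begin{itemize}
\item With $g=0$, a single CNOT already exceeds $\mathcal{O}(gn)=0$.
\item With $h=0$, $g=1$ and a CNOT part realizing a generic $A\in GL(n,2)$, you need $\Omega(n^2/\log n)$ CNOTs, not $\mathcal{O}(n)$. This is the Patel--Markov--Hayes counting argument. It survives arbitrary single-qubit gates via a covering-number count, because distinct permutation matrices stay at distance $\sqrt{2}$ even modulo global phase.
\end{itemize}

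The path-sum relabeling fails for a concrete reason. Relabeling is free only in the representation $|x\rangle\mapsto e^{i\phi(x)}|A_jx\rangle$. A Hadamard on wire $q$ contributes the phase $(-1)^{\ell_q(x)\,y}$, where $\ell_q$ is the $q$-th row of $A_j$, so wire $q$ must physically hold $\ell_q(x)$ at that moment. At the end, every wire must hold its output linear form. None of this work can be charged to an $R_Z$ gate. For the same reason, your claim that $R_Z$-free blocks merge with their neighbours is wrong whenever such a block contains CNOTs. In $H_t\,\mathrm{CNOT}\,H_t$ the middle block has $g_j=0$, yet it costs a CNOT that your accounting assigns to no term.

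So the lemma can only be proved under the hypothesis the paper uses tacitly, which is your first fallback. The circuit must be presented as \emph{diagonal} phase polynomials interleaved with Hadamard layers, so every $A_j$ is trivial. Any CNOT that is not part of a term has to be rewritten as $H_t\,CZ\,H_t$, with the three terms $Z_c$, $Z_t$, $Z_cZ_t$ of $CZ$ counted in $g$. Your uncomputed ladders realize only the diagonal part of a block. They therefore implement the block exactly when $A_j$ is trivial. State this assumption explicitly: under it, your second paragraph is already a complete proof, identical to the paper's, and no path-sum argument is needed.

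Without that assumption, your block decomposition gives:
\begin{itemize}
\item $\mathcal{O}(gn+(h+1)\,n^2/\log n)$ CNOTs in the unconstrained case;
\item $\mathcal{O}(gn+(h+1)\,n^2)$ CNOTs in the constrained case.
\end{itemize}
Your other fallback, charging $A_j$ to its block's terms, needs $g_j=\Omega(n/\log n)$ for every block in the unconstrained case, and $g_j=\Omega(n)$ for bounded-degree coupling graphs.
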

\begin{proof}
    First, we split the circuit in $h+1$ sections by cutting the circuit at each Hadamard gate. The resulting subcircuits are all phase polynomials with $g_i$ terms, where $g=\sum_{i=0}^{h+1}g_i$. 
    By Theorem~\ref{thm:unconstrained-upper} and \ref{thm:constrained-upper}, each phase polynomials requires are most $\mathcal{O}(g_in)$ CNOTs in both the constrained and unconstrained case, resulting in at most $\mathcal{O}(\sum_{i=0}^{h+1}g_i n)=\mathcal{O}(n\sum_{i=0}^{h+1}g_i)=\mathcal{O}(gn)$.
\end{proof}
However, for the lower bound, we need to know how to split the circuit into $\frac{h}{2}+1 \leq h_{\min} \leq h+1$ phase polynomials such that the CNOT count is minimized. For this, the number of Hadamard gates needs to be minimized as well. This can be obtained using the TODD algorithm~\cite{todd,vivientodd,vandaele2025lower}. 

Additionally, although a CNOT$+R_Z$ circuit is a phase polynomial, the number of terms in the phase polynomial is only equal to the number of $R_Z$ gates if each $R_Z$ gate acts on a unique parity. Otherwise, two $R_Z$ gates with angle $\theta_1$ and $\theta_2$ that act on the same parity can be replaced by a single $R_Z$ gate with angle $\theta_1+\theta_2$, this is called Phase folding~\cite{vandaele2024optimal,vandaele2025optimal}.
\begin{lemma}
    \label{lem:universal-lower}
    A CNOT+$H+R_Z$ circuit containing $h$ gates and $g$ $R_Z$ gates requires at least $\mathcal{O}(\texttt{argmin}_{g_0...g_{h_{\min}}}\sum_{i=0}^{h_{\min}}\frac{g_in}{\max(\log g_i, n)})$ CNOTs in both the constrained and unconstrained case where the \texttt{argmin} determines the optimal split of the circuit into subcircuit phase polynomials.
\end{lemma}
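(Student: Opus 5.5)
The plan mirrors the proof of Lemma~\ref{lem:universal-upper}, with the upper-bound theorems replaced by the lower-bound theorems.

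First, we fix a Hadamard-minimal representation of the circuit. By the discussion preceding the statement (TODD-style Hadamard minimization), any CNOT$+H+R_Z$ circuit can be rewritten so that the Hadamard layers cut it into $h_{\min}$ phase-polynomial sections, with $\frac{h}{2}+1 \leq h_{\min} \leq h+1$. Any implementation of the circuit induces such a decomposition into sections with $g_0,\dots,g_{h_{\min}}$ terms. Within each section we apply phase folding, so that every term acts on a unique parity. This lets us use $g_i$ as the number of distinct terms in section $i$.

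Second, we lower bound each section separately. Section $i$ is an $n$-qubit phase polynomial with $g_i$ terms. By Theorem~\ref{thm:unconstrained-lower} in the unconstrained case, and Theorem~\ref{thm:constrained-lower} in the constrained case, it needs at least $\mathcal{O}(\frac{g_i n}{\max(\log g_i,1)})$ CNOTs. The constrained bound also follows from Observation~\ref{obs:constrained>unconstrained}.

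The Hadamard layers contain no CNOTs. Moreover, CNOTs placed in one section cannot disconnect qubits from terms of another section, because the Hadamards block commutation of the terms. So the per-section counts add, and summing over $i$ bounds any given split. Since the true circuit realizes some split, the CNOT count is at least the minimum of this sum over all admissible splits $g_0,\dots,g_{h_{\min}}$ with $\sum g_i = g$. That minimum is what the \texttt{argmin} in the statement denotes, and this gives the claim. The statement's denominator $\max(\log g_i, n)$ should be read as the $\max(\log g_i,1)$ of the earlier theorems. I would note this and use the latter.

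The main obstacle is the additivity step. We must argue that no implementation can share CNOTs across a Hadamard boundary in a way that beats the per-section bounds. We also must argue that the minimizing split is attained by some Hadamard-minimal decomposition, which is why we minimize over all splits rather than fixing one. I would first try to handle this by the disconnection argument of Lemma~\ref{lem:disconnect-unconstrained}, applied per section: each qubit must be disconnected from every term of section $i$ strictly between the neighbouring Hadamard layers, so the required CNOTs lie in disjoint time windows. This makes the counts additive.
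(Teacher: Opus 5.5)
Your proposal is the paper's proof: cut the circuit at the Hadamard layers, take the optimal split via the \texttt{argmin}, apply Theorem~\ref{thm:unconstrained-lower} and Theorem~\ref{thm:constrained-lower} to each section (the paper cites the former twice, evidently meaning the latter), and sum, and your $\max(\log g_i,1)$ is the denominator the paper's own proof uses in its per-section bound. Your time-window additivity argument goes beyond the paper, but like the paper's per-section step it is only heuristic, because those theorems assert only that \emph{some} phase polynomial needs that many CNOTs, Lemma~\ref{lem:disconnect-unconstrained} is argued by exhibiting a strategy (an upper bound), and CNOTs on qubits untouched by an incomplete Hadamard layer can serve terms on both sides of it, so the windows are not independent.
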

\begin{proof}
    Similar to the upper bound, we split the circuit on Hadamard gates. However, it is possible that some Hadamards can be parallelized. Additionally, although the resulting CNOT$+RZ$ circuits are phase polynomials, they might contain terms that commute through the layers of Hadamard gates and could be synthesized as part of a different phase polynomial. As such, we use $\texttt{argmin}_{g_0...g_{h_{\min}}}$ to define some optimal split of the circuit into phase polynomials. Then, by Theorem~\ref{thm:unconstrained-lower} and \ref{thm:unconstrained-lower} each of these phase polynomials require at least $\mathcal{O}(\frac{g_i n}{\max(\log g_i, 1)})$ CNOTs, resulting in $\mathcal{O}(\texttt{argmin}_{g_0...g_{h_{\min}}}\sum_{i=0}^{h_{\min}}\frac{g_in}{\max(\log g_i, n)})$ CNOTs at least.
\end{proof}
Although this Lemma holds in both the constrained and unconstrained case, they might have a different optimal split of the circuit into phase polynomials. However, in the case we use the optimal split of the unconstrained case to synthesize a circuit in the constrained case, this will only differ in CNOT count by a constant factor.

\begin{theorem}[Qubit routing for free]
    A CNOT$+H+R_Z$ circuit defined by a sequence of phase polynomials $g_0...g_{h_{\min} }$ interleaved by $h_{\min} $ (incomplete) layers of Hadamard gates and said phase polynomials have been decomposed with unconstrained synthesis, has a routing overhead factor of $\mathcal{O}(\log n) \leq \alpha \leq \mathcal{O}(n\log^2n)$ when using conventional SWAP-based routing, and a routing overhead of $\alpha = \mathcal{O}(1)$ when resynthesizing it using constrained synthesis. 
\end{theorem}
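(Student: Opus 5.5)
The plan is to reduce the statement to Theorem~\ref{thm:routing4freepp}, applied blockwise to the phase polynomials $g_0,\dots,g_{h_{\min}}$. The Hadamard layers are single-qubit gates. They need no CNOTs, no routing, and no resynthesis, so every CNOT-dependent quantity is additive over the blocks. I will write $C_i^{\mathrm{OPT}}$ for the optimal CNOT count of the $i$-th phase polynomial and $C_i^{\mathrm{Routed}}$ for the count after routing. The total optimal count is then $\sum_i C_i^{\mathrm{OPT}}$. By Lemma~\ref{lem:universal-lower} and Lemma~\ref{lem:universal-upper}, this sum lies between $\mathcal{O}(\sum_i \frac{g_i n}{\max(\log g_i,1)})$ and $\mathcal{O}(gn)$.

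For the SWAP-based case, I will bound each block separately.
\begin{itemize}
\item Theorem~\ref{thm:swap-upper} gives $C_i^{\mathrm{Routed}} \le \mathcal{O}(g_i n\cdot n\log^2 n)$.
\item Theorem~\ref{thm:swap-lower} gives $C_i^{\mathrm{Routed}} \ge \mathcal{O}(\frac{g_i n}{\max(\log g_i,1)}\log n)$.
\end{itemize}
Summing over $i$ and dividing by $\sum_i C_i^{\mathrm{OPT}}$ is then a mediant argument. If every ratio $C_i^{\mathrm{Routed}}/C_i^{\mathrm{OPT}}$ lies in an interval $[a,b]$, then so does the ratio of the sums. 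This yields $\mathcal{O}(\log n)\le\alpha\le\mathcal{O}(n\log^2 n)$ via Corollaries~\ref{cor:swap-overhead-lower} and~\ref{cor:swap-overhead-upper}.

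One extra point needs checking: routing may leave the qubit permutation altered at the boundary between blocks. Restoring the layout before each Hadamard layer costs at most one sorting network. Since each block contains at least one CNOT layer, this cost is absorbed into the per-layer sorting cost already counted in Lemma~\ref{lem:swap_cost}. It therefore changes only constants.

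For the resynthesis case, I will resynthesize each $g_i$ with constrained synthesis. Theorems~\ref{thm:constrained-upper} and~\ref{thm:constrained-lower} show that the constrained counts match the unconstrained bounds up to a factor of at most $4$. Corollary~\ref{cor:swap-overhead-routed} shows that no SWAPs are added. Hadamards are native, so they act on whatever physical qubits the synthesis leaves the logical qubits on. If the constrained synthesis ends in a permuted layout, I will either apply the Hadamard on the permuted qubit, or absorb the residual linear reversible map into the next block's parity matrix. The mediant argument again gives $1\le\alpha\le 4$, so $\alpha=\mathcal{O}(1)$.

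The main obstacle is that "optimal" for the whole circuit may use a different split $g_0,\dots,g_{h_{\min}}$ from the one the constrained synthesis uses. This matters especially for terms that commute through Hadamard layers, as noted after Lemma~\ref{lem:universal-lower}. I will handle it by fixing the split given in the statement, namely the unconstrained optimal split. I will then invoke the remark following Lemma~\ref{lem:universal-lower} that reusing this split in the constrained case costs only a constant factor. This keeps $\alpha=\mathcal{O}(1)$ without having to compare optimal splits across the two settings.
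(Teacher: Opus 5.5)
This is essentially the paper's proof: you apply Theorem~\ref{thm:routing4freepp} to each phase-polynomial block and use that the per-block overhead factors do not depend on $g_i$, so they pull out of the sums in Lemmas~\ref{lem:universal-upper} and~\ref{lem:universal-lower}; your mediant argument makes that step explicit, and your remarks on inter-block layouts and on fixing the unconstrained split spell out what the paper leaves implicit. One side option should be dropped, though it is not needed: a residual linear reversible map that is not a permutation generally cannot be absorbed into the next block's parity matrix, because it does not commute through an incomplete Hadamard layer; the constrained synthesis of Theorem~\ref{thm:constrained-upper} already uncomputes or resynthesizes the linear part, and a pure permutation is handled by relabelling the Hadamards as you say.
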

\begin{proof}
    Using Theorem~\ref{thm:routing4freepp}, each phase polynomial has a routing overhead factor of $\mathcal{O}(\log n) \leq \alpha \leq \mathcal{O}(n\log^2n)$ when using SWAP-based routing and a routing overhead ratio of $\alpha = \mathcal{O}(1)$ when using architecture-aware synthesis. Since these are factors independent of $g_i$, they can be moved out of sum in Lemma~\ref{lem:universal-upper} and \ref{lem:universal-lower}, proving the Theorem.
\end{proof}

Hence, by architecture-awarely synthesizing rather than routing a circuit that was synthesized without taking connectivity into account, we get qubit routing for free.
  
\section{Discussion}\label{sec:discussion}
The results of this paper hold for the specific setting of circuits without the use of additional ancillas or classically controlled gates. If this was allowed, it might be possible to get different trade-offs between synthesis and SWAP-based routing.
Additionally, we did not take into account the effect of qubit mapping on the architecture-aware synthesis. Although the algorithmic bounds hold regardless of qubit placement, it is of course desirable to place qubits that often interact with each other close to each other on the hardware. 

However, the main limitation of this result is that the path sum representation needs to be well optimized. If it is not, the phase polynomials might be unnecessarily large and thus result in suboptimal circuits. For example, there exist large phase polynomials that are non-trivially equivalent to identity called Spider nest identities~\cite{spidernests}. Similarly, there are cases where terms from phase polynomials on different sides of a Hadamard gate commute and cancel other terms out. In the case of the Multi-Pauli Commutation Relation~\cite{MCR}, these can be non-obvious to spot. Thus, the results in this paper are theoretical in nature. There is still a lot of work to be done in optimizing the path sum representation before synthesis to improve results.

Similarly, we can slightly improve gate counts by treating Clifford gates as special members in phase polynomials + Hadamard circuits. We can write the terms in the phase polynomials as Pauli string operators and push all Clifford gates in the circuit to one side (usually the end of the circuit). This results in a circuit consisting of a sequence of commuting Pauli Exponentials and a Clifford Tableau. Doing this removes any Clifford angle terms from the phase polynomials, making them smaller. Since the number of gates that can be synthesized from a Clifford Tableau is bounded~\cite{aaronson2004improved,winderl2023architecture}, these gates will be optimized away in large circuits.

The findings of this paper technically apply directly to Clifford$+T$ circuits, rather than Clifford$+R_Z$. However, this is not recommended due to the large sequences of $SHT$ gates that are generated when synthesizing an $R_Z$ gate~\cite{rossselinger}. These sequences highly overestimate the CNOT requirements of the circuit. Thus, you want to either synthesize the $R_Z$ as late as possible, or temporarily replace the large sequences of single qubit gates with a parameter so that they are treated as a singular term. So, this boils back to the problem of having to start with a well-optimized representation.

That being said, in practical use cases, a circuit that needs to be routed comes from somewhere. Usually, the circuit is defined by some abstract representation from which it is synthesized before routing. The main takeaway of this paper is that during this initial synthesis step, it is best to immediately generate gates that fit the target hardware rather than generating incompatible gates and needing to fix them with SWAP-based routing. This does increase the classical computational needs for synthesis, but to the benefit of improved quantum runtime. As long as classical computational resources are abundant and quantum resources scarce, this is a worthwhile trade-off.

\section{Future work}\label{sec:future}
This paper marks the beginning of a new era in quantum circuit transpilation. If we want to make optimal use of quantum computers, we need to step away from SWAP-based routing and instead start synthesizing gates that fit the target hardware. This is a great departure from the existing structure of the quantum stack that starts with a well-optimized synthesized quantum circuit that then needs to be routed. The best way to do this is to get rid of the circuit as a high-level input format for the stack and instead use a format with more complicated primitives~\cite{heunen2026quantum, qsaner}. 

These more complicated primitives then need architecture-aware synthesis variants so that they can be efficiently executed on target hardware. Some variants of some primitives already exist for nearest-neighbor architectures (e.g. Block-encoding~\cite{foqs-lcu}), but they should be extended to arbitrary graphs since real devices rarely implement a perfect grid (e.g. IBM heavy-hexagon~\cite{heavyhex}). 

The results from this paper are good news for devices with stationary qubits and sparse connectivity. However, it can also be interesting for flying qubits that require shuttling for their multi-qubit operations. If the quantum stack moves towards late synthesis of quantum circuits, then the synthesis could also take shuttling constraints into account such that gates are synthesized to require less shuttling.

Lastly, it is not known how tight the proposed CNOT lower bound for phase polynomial synthesis is. We have shown that it is tight for known optimal cases, but can we reduce the upper bound further? If we can, we might be able to make better algorithms. If we cannot, which phase polynomials require more CNOTs and why?
There is still a lot to figure out.

\section{Conclusion}\label{sec:conclusion}
In conclusion, we proposed new CNOT lower bounds for phase polynomial synthesis in both the connectivity constrained and unconstrained cases. The surprising result is that these lower bounds are the same. Similarly, the upper bounds for phase polynomial synthesis in the connectivity constrained case are the same as the unconstrained case. 
We use this to conclude that we can solve the qubit routing problem for free if we use architecture-aware synthesis rather than SWAP-based routing. 

Importantly, our results show that routing a well-optimized circuit with SWAPs is not sufficient to obtain the smallest circuit possible. We cannot synthesize optimally and then fix connectivity constraints afterwards; this needs to be done jointly.
As a result, we need to rethink the abstraction levels in the quantum software stack.

\section*{Acknowledgements}
This research has been funded by Business Finland Project Enhanced Middleware for Quantum Software (EM4QS) and a Postdoc fellowship from the Finnish Quantum Flagship.
I would like to thank Ville Kortovirta, Frans Perkkola, Leo Becker, and Valter Uotila for useful discussions and encouragements.

\printbibliography

\end{document}